\documentclass[11pt,fleqn]{article}
\usepackage{amssymb,latexsym,amsmath,amsfonts,graphicx, ebezier}
\usepackage{pictex,color}
\usepackage{pict2e}

     \topmargin -15mm
    \textwidth 145 true mm
    \textheight 230 true mm
    \advance\textheight by \topskip

    \numberwithin{equation}{section}

\def\beq{\begin{equation}}
\def\eeq{\end{equation}}

\def\bit{\begin{itemize}}
\def\eit{\end{itemize}}

\makeatletter
    \oddsidemargin 7mm
    \evensidemargin 7mm
    \marginparwidth 19mm
\def\eqalign#1{\null\vcenter{\def\\{\cr}\openup\jot\m@th
  \ialign{\strut$\displaystyle{##}$\hfil&$\displaystyle{{}##}$\hfil
      \crcr#1\crcr}}\,}
\makeatother

\newcommand{\be}{\begin{equation}}
\newcommand{\ee}{\end{equation}}

    \def\Re{{\rm Re \,}}
    \def\Im{{\rm Im \,}}

    \def\bigO{{\cal O}}

    \def\P2n{{\rm P}_{{\rm II}}^{(n)}}

    \newtheorem{theorem}{Theorem}[section]
    
    \newtheorem{corollary}[theorem]{Corollary}
    \newtheorem{proposition}[theorem]{Proposition}
    
    \newtheorem{Definition}[theorem]{Definition}
    
    \newtheorem{Remark}[theorem]{Remark}
    \newenvironment{remark}{\begin{Remark}\rm}{\end{Remark}}
    \newtheorem{Example}[theorem]{Example}
    
    \newtheorem{Assumptions}[theorem]{Assumptions}

    \newenvironment{proof}%
    {\rm \trivlist \item[\hskip \labelsep{\bf Proof. }]}%
    {\hspace*{\fill}$\Box$\endtrivlist}
    {\rm \trivlist \item[\hskip \labelsep{\bf Proof}]}%
    {\hspace*{\fill}$\Box$\endtrivlist}

\begin{document}
\title{Asymptotics for Toeplitz determinants: perturbation of symbols with a gap}
\author{Christophe Charlier\footnote{Institut de Recherche en Math\'ematique et Physique,  Universit\'e
catholique de Louvain, Chemin du Cyclotron 2, B-1348
Louvain-La-Neuve, BELGIUM}
 \ and Tom Claeys\footnotemark[\value{footnote}] }

\maketitle

\begin{abstract}
We study the determinants of Toeplitz matrices as the size of the matrices tends to infinity, in the particular case where the symbol has two jump discontinuities and tends to zero on an arc of the unit circle at a sufficiently fast rate. 
We generalize an asymptotic expansion by Widom \cite{Widom}, which was known for symbols supported on an arc.
We highlight applications of our results in the Circular Unitary Ensemble and in the study of Fredholm determinants associated to the sine kernel. 
\end{abstract}

\section{Introduction}

We consider Toeplitz determinants of the form
\begin{equation}\label{Toeplitz}
D_n(f) = \det(f_{j-k})_{j,k=0,\ldots, n-1},\qquad f_{j} = \frac{1}{2\pi} \int_{0}^{2\pi} f(e^{i\theta}) e^{-ij\theta} d\theta,
\end{equation}
where the symbol $f$ is given by
\begin{equation}\label{symbol}
f(e^{i\theta}) = e^{W(e^{i\theta})}\ \times\ 
\begin{cases}
1,&\mbox{ for $-\theta_0\leq \theta\leq \theta_0$,}\\
s,&\mbox{ for $\theta_0<\theta<2\pi-\theta_0$,}
\end{cases}
\end{equation}
and $f_{k}$ is the $k$-th Fourier coefficient of $f$. The symbol depends on the parameters $s\in [0,1]$ and $\theta_0 \in (0,\pi)$, and on a function $W(z)$ which we assume to be analytic in a neighbourhood of the unit circle. In the simplest case where $W(z)=0$, $f$ is piecewise constant with jump discontinuities at $e^{\pm i\theta_0}$.

For $s\in [0,1]$ and $\theta_0\in(0,\pi)$ fixed, the large $n$ asymptotic behavior for the Toeplitz determinants $D_n(f)=D_n(s,\theta_0,W)$ is well understood.
For $s=0$, the symbol is supported on the arc $\gamma= \{e^{i\theta}:-\theta_0\leq \theta\leq \theta_0\}$. If $f$ is positive on $\gamma$ and symmetric, $f(e^{i\theta}) = f(e^{-i\theta})$, we have the following asymptotics due to Widom \cite{Widom},
\begin{multline} \label{Asymptotic_one_arc}
\ln D_n \left(s=0, \theta_{0},W \right) = n^2\ln \sin  \frac{\theta_0}{2}  + n\widetilde{W}_0 - \frac{1}{4} \ln n  \\
+\sum_{k=1}^{\infty}k\widetilde{W}_k\widetilde{W}_{-k}-\frac{1}{4}\ln\cos\frac{\theta_0}{2}+\frac{1}{12}\ln 2 +3\zeta'(-1)+ o(1), \qquad\mbox{ as $n\to\infty$},
\end{multline}
where $\widetilde{W}_k$ is the $k$-th Fourier coefficient of $\widetilde{W}$, defined by
\begin{equation}
\widetilde{W}(e^{i\theta}) = W(e^{2i\arcsin ( \sin \frac{\theta_{0}}{2}\sin \frac{\theta}{2} )}),
\end{equation}
and where $\zeta$ is the Riemann $\zeta$-function.
For $W=0$, it was shown in \cite{Krasovsky, Krasovsky2} that this result holds not only for $0<\theta_0<\pi$ fixed, but also if $\theta_0$ approaches $\pi$ slowly enough such that $n(\pi-\theta_0)$ is large. The error term in (\ref{Asymptotic_one_arc}) then becomes $\bigO(n^{-1}(\pi-\theta_0)^{-1})$.

On the other end of the parameter range for $s$, we have $s=1$: here the symbol is smooth and we have the Szeg\H{o} asymptotics
\begin{equation} \label{Asymptotic_Szego}
\ln D_n \left(s=1, \theta_{0}, W \right) = nW_0+\sum_{k=1}^{\infty}kW_kW_{-k}+ o(1), \qquad \mbox{ as } n \to \infty.
\end{equation}
For $0 < s < 1$, $f$ has two jump discontinuities which are special cases of Fisher-Hartwig (FH) singularities. FH singularities are generally characterized by two parameters: $\alpha$, which describes a root-type singularity, and $\beta$, which describes a jump discontinuity. Our symbol $f$ has Fisher-Hartwig singularities at $e^{\pm i\theta_0}$ with parameters $\alpha=0$ (which means that there are no root-type singularities) and $\beta=\mp \frac{1}{2\pi i} \log s$.
Asymptotics for Toeplitz determinants with Fisher-Hartwig singularities have been studied by many authors \cite{FH, W, Basor, Basor2, BS, Ehr, Deift2, DeiftItsKrasovsky}. Applied to our symbol $f$, the results from, e.g., \cite{Deift2} imply that, for $s\in (0,1)$ fixed, i.e.\ independent of $n$, we have
\begin{multline} \label{Asymptotic_FH}
\ln D_n \left(s,\theta_{0},W \right) = nW_0 + \frac{(\ln s)^2}{2 \pi^{2}}  \ln n +\frac{\ln(2\sin\theta_0)}{2\pi^2} \\
\qquad\qquad +\frac{\ln s}{\pi} \sum_{k=1}^{+\infty}(W_k+W_{-k}) \sin(k\theta_0) 
+ \sum_{k=1}^{+\infty}kW_kW_{-k}\\
+2\ln\left(G\left(1+\frac{\ln s}{2\pi i}\right)G\left(1-\frac{\ln s}{2\pi i}\right)\right) + o(1),
\end{multline}
as $n\to\infty$, where $G$ is Barnes' $G$-function.

If we let $s$ tend to $0$, the jump discontinuities of $f$ turn into endpoints of the support of $f$, and $f$ transforms from a Fisher-Hartwig symbol to an arc-supported symbol on $[-\theta_0,\theta_0]$. Nevertheless, letting $s\to 0$ in \eqref{Asymptotic_FH}, we do not recover the Widom asymptotics \eqref{Asymptotic_one_arc}. This means that a non-trivial critical transition in the asymptotic behavior for $D_n(s,\theta_0,W)$ takes place as $s\to 0$. In this paper, we show that the Widom asymptotics (\ref{Asymptotic_one_arc}) extend to the case where $s\neq 0$ but $s=s(n)\to 0$ sufficiently rapidly as $n\to\infty$. 

\begin{theorem}\label{theorem: extensionWidom}
Let $\theta_0\in (0,\pi)$ and define
\begin{equation}\label{xcintro}
x_{c} = -2\ln \tan \frac{\theta_{0}}{4}.
\end{equation} Let $f$ be of the form (\ref{symbol}) with $W$ analytic in a neighbourhood of the unit circle.
As $n\to\infty$ and simultaneously $s\to 0$ in such a way that $0\leq s\leq e^{-x_cn}$, we have
\begin{equation}\label{extension Widom}
\ln D_n(s,\theta_0,W)=\ln D_n(0,\theta_0,W)+o(1).
\end{equation}
The error term $o(1)$ is uniform for $0\leq s\leq e^{-x_c n}$ and $\epsilon \leq \theta_0 \leq \pi-\epsilon$, $\epsilon>0$, and can be specified as
\begin{equation}\label{error}
o(1)=\bigO(n^{-1/2} e^{x_cn} s).
\end{equation}
In addition, the result extends to the case where
$\theta_0$ approaches $\pi$ at a sufficiently slow rate:  (\ref{extension Widom}) holds for $\epsilon<\theta_0<\pi-\frac{M}{n}$ with $M$ sufficiently large and $0\leq s\leq e^{-x_c n}$, with the error term given by
\begin{equation}\label{error large y}
o(1) = \bigO((\pi-\theta_0)^{1/2}n^{-1/2}e^{nx_c}s).
\end{equation}
\end{theorem}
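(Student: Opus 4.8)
The strategy is to compare $D_n(s,\theta_0,W)$ with $D_n(0,\theta_0,W)$ by expressing both through a Riemann–Hilbert (RH) analysis of the orthogonal polynomials on the unit circle associated to the symbol $f$. I would set up the standard $2\times 2$ RH problem (Baik–Deift–Johansson / Deift) for polynomials orthogonal with respect to $f(e^{i\theta})\,d\theta/2\pi$, whose solution $Y$ encodes $D_n(f)$ via $D_n(f)/D_{n-1}(f)$ and $Y_{21}$, $Y_{11}$ entries at $0$ and $\infty$. The key point is that, for $s=0$, this is exactly the RH problem solved by Widom / Krasovsky leading to \eqref{Asymptotic_one_arc}, and the whole apparatus — global parametrix built from the conformal map sending the arc complement to an exterior domain, Bessel-type local parametrices at the hard edges $e^{\pm i\theta_0}$ — is already in place. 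The perturbation by a nonzero $s$ only changes the jump on the complementary arc $\{e^{i\theta}:\theta_0<\theta<2\pi-\theta_0\}$: instead of the symbol vanishing there (which in the $s=0$ problem means the orthogonality is effectively only on $\gamma$), we now have an exponentially small but nonzero jump.

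The main steps, in order: (i) Perform the standard series of transformations $Y\mapsto T\mapsto S\mapsto R$ for the $s=0$ problem, recording how the final small-norm problem $R$ has jump matrices that are $I+\bigO(n^{-1})$ on the contour, so that $R=I+\bigO(n^{-1})$ and the $s=0$ asymptotics follow. (ii) Redo the same transformations for $s>0$ and track the extra contribution. The crucial observation is that the opening of lenses around $\gamma$ in the $s=0$ case is governed by the factorization of the symbol; when $s\neq 0$ the continuation of $f$ across the complementary arc is controlled by the $\phi$-function, and the size of the additional jump on (a deformed contour near) that arc is $\bigO(s\,e^{x_c n})$, because $x_c=-2\ln\tan(\theta_0/4)$ is exactly the exponential rate at which the relevant Szegő-type function $\mathcal{D}(z)^{\pm 1}$ or the product of phase factors blows up at the "far" point $e^{i\pi}$ of the complementary arc. (iii) Hence the jump for the $R$-problem in the perturbed case differs from the unperturbed one by a matrix of size $\bigO(n^{-1/2}e^{x_c n}s)$ (the extra $n^{-1/2}$ coming from the $L^2$ normalization / the $n^{-1/4}$ prefactors in the global parametrix squared, and from the measure of the relevant contour piece). (iv) By the small-norm theorem, $R_{\mathrm{pert}} - R_{\mathrm{unpert}} = \bigO(n^{-1/2}e^{x_c n}s)$, and translating this back through the (explicitly invertible) transformations to $D_n$ via the differential identity or the direct formula for $D_n$ in terms of $Y$ gives $\ln D_n(s,\theta_0,W)-\ln D_n(0,\theta_0,W)=\bigO(n^{-1/2}e^{x_c n}s)$, which is the claimed \eqref{error}.

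For the regime where $\theta_0\to\pi$, I would combine this with the uniformity already established in \cite{Krasovsky, Krasovsky2} for the $s=0$ asymptotics when $n(\pi-\theta_0)\to\infty$: the local parametrices at $e^{\pm i\theta_0}$ remain valid as long as the two hard edges stay separated on the scale $1/n$, i.e.\ $\pi-\theta_0 \gg 1/n$, and one re-examines the bound on the additional jump. Near $\theta_0=\pi$ the conformal map and the phase functions degenerate; a careful bookkeeping of the $(\pi-\theta_0)$-dependent constants in the global parametrix (which contributes the extra $(\pi-\theta_0)^{1/2}$ factor, essentially from $\sin\theta_0\sim\pi-\theta_0$ appearing in the relevant normalizing constants) yields the error \eqref{error large y}, valid for $\theta_0<\pi-M/n$ with $M$ large.

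The main obstacle I anticipate is step (ii)–(iii): correctly identifying $x_c$ as the exponential rate and proving the bound on the extra jump \emph{uniformly} in $\theta_0\in[\epsilon,\pi-\epsilon]$. This requires analyzing the analytic continuation of the symbol's Szegő function across the complementary arc and showing that the worst point is exactly $e^{i\pi}$, where the continuation picks up the factor $\tan^{-2n}(\theta_0/4)=e^{x_c n}$; any sloppiness here would give a suboptimal threshold. The rest — the transformations, the small-norm argument, the differential identity in $s$ (or direct extraction of $D_n$ from $Y$) — is by now routine in this circle of ideas.
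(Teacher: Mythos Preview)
Your overall architecture matches the paper's: set up the RH problem for $Y$, run the Deift--Zhou transformations $Y\mapsto T\mapsto S\mapsto R$ using the $g$-function of the one-arc equilibrium measure, identify $x_c$ as the value of the phase $\phi$ at the far point $z=-1$, and feed the resulting asymptotics into a differential identity in $s$ (the paper's identity is $\partial_s\ln D_n=\frac{1}{2\pi i}\int_{\gamma^c}z^{-n}[Y^{-1}Y']_{21}e^{W}\,dz$). So the strategy is right.

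There is, however, a genuine gap in your step (iii). The claim that the extra jump for $R$ is of size $\bigO(n^{-1/2}e^{x_cn}s)$ is false as an $L^\infty$ statement: on $\gamma^c$ near $z=-1$ the perturbation in $J_S$ has off-diagonal entry $s\,e^{n\phi(z)}e^{W}$, and since $\phi(-1)=x_c$ this is $\bigO(s\,e^{nx_c})$, which is $\bigO(1)$ at the edge $s=e^{-x_cn}$ of the allowed range. The $n^{-1/2}$ does not come from the global parametrix or an $L^2$ hand-wave; the standard small-norm machinery needs the sup-norm of $J_R-I$ to be small, and here it is not. The paper resolves this by building an additional (very simple) local parametrix in a disk around $-1$, namely $P=P^{(\infty)}\begin{pmatrix}1&\tilde h\\0&1\end{pmatrix}$ with $\tilde h(z)=\frac{1}{2\pi i}\int \frac{e^{n(\phi(\xi)-x)}e^{W(\xi)}}{\xi-z}d\xi$; the quadratic vanishing of $\phi-x_c$ at $-1$ then gives $\tilde h=\bigO(n^{-1/2}e^{n(x_c-x)})$ on $\partial D(-1,r)$ by a Laplace-type estimate, and this is where the $n^{-1/2}$ genuinely enters. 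Without this piece your small-norm argument breaks down precisely in the regime $s\sim e^{-x_cn}$ that the theorem is about. A second, less critical, omission: near $z_0,\overline{z_0}$ the parametrix must also absorb the new jump on $\gamma^c$, and the paper does this by a modified Bessel parametrix $\widehat\Psi=(I+A)\Psi$ with an explicit rank-one logarithmic correction $A$; you would otherwise carry a residual jump for $R$ inside those disks. Finally, note that in the paper the factor $n^{-1/2}e^{nx_c}s$ in the error ultimately arises from applying Laplace's method to $\int_{\gamma^c}e^{n\phi(z)}\,dz$ in the differential-identity integral and then integrating once in $s$, rather than from comparing $R_{\mathrm{pert}}$ to $R_{\mathrm{unpert}}$ directly; reorganizing your step (iv) along these lines will make the bookkeeping cleaner. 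The extension to $\theta_0\to\pi$ proceeds as you say, with shrinking disks of radius $\sim(\pi-\theta_0)$, and the improved factor $(\pi-\theta_0)^{1/2}$ again comes from the Laplace integral over the shrinking arc $\gamma^c$.
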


\begin{figure}[t]
\begin{center}
\includegraphics[width=0.7\textwidth]{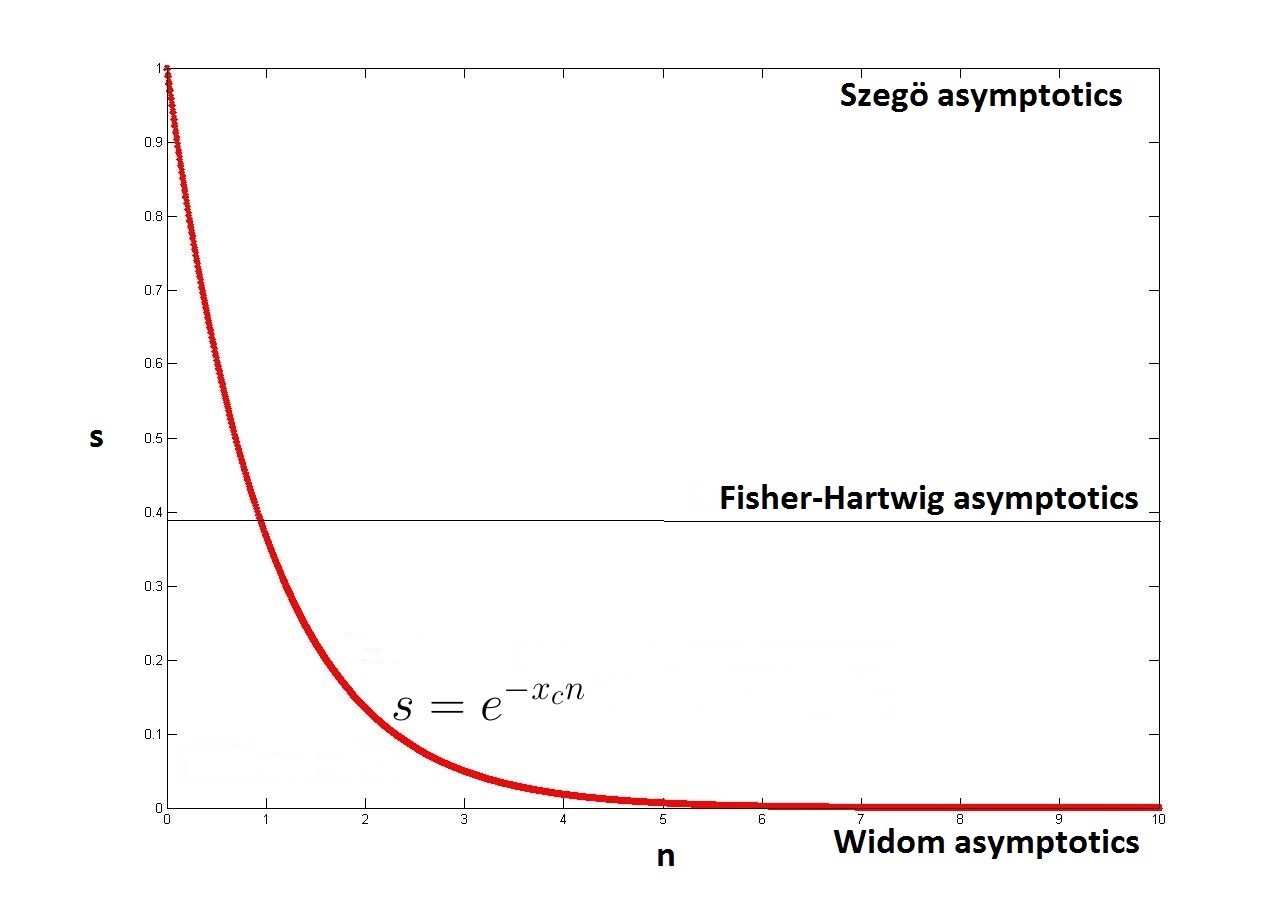}
\end{center}
\caption{\label{fig_s_n} As $n\to\infty$, $\ln D_n(s,\theta_0,W)$ follows Widom asymptotics for $s=0$, Fisher-Hartwig asymptotics for $s\in(0,1)$ fixed, and Szeg\H{o} asymptotics for $s=1$. Theorem \ref{theorem: extensionWidom} implies that the Widom asymptotics remain valid as $n\to\infty$ with $s=s(n)$ below the curve $s=e^{-x_cn}$. As $n\to\infty$ with $s=s(n)$ above the curve, a different type of asymptotic behavior is expected.}
\end{figure}

\begin{remark}\label{remark: elliptic}
We believe the bound $s\leq e^{-x_cn}$ is sharp: as $n\to\infty$, $s\to 0$ with $s>e^{-x_cn}$, the asymptotic behavior for $\ln D_n(s,\theta_0,W)$ is expected to be described in terms of elliptic $\theta$-functions. 
For a heuristic discussion, see Section \ref{section: elliptic}.
\end{remark}
\begin{remark}
If $f$ is positive on $\gamma$ and even in $\theta$, the asymptotics for $\ln D_n(0,\theta_0,W)$ in (\ref{extension Widom}) are given by (\ref{Asymptotic_one_arc}). 
The perturbative result (\ref{extension Widom}) is valid for general analytic $W$, even if the positivity and symmetry conditions needed for the Widom asymptotics do not hold. Note that the error term $o(1)=\bigO((\pi-\theta_0)^{1/2}n^{-1/2}e^{nx_c}s)$ in (\ref{extension Widom}) improves as $\theta_0$ approaches $\pi$. This is reasonable since the perturbation of the arc-supported symbol then takes place on a shrinking arc near $-1$. For our proof, it is however crucial that $n(\pi-\theta_0)$ is sufficiently large.
\end{remark}

\subsection*{Application 1: the Circular Unitary Ensemble}
Consider the Circular Unitary Ensemble (CUE) which is the set of unitary $n\times n$ matrices with the Haar measure.
The eigenvalues $e^{i\theta_1},\ldots, e^{i\theta_n}$ in this ensemble have the joint probability distribution
\begin{equation}\label{jpdf}
\frac{1}{(2\pi)^n n!}\prod_{1\leq j<k\leq n}|e^{i\theta_j}-e^{i\theta_k}|^2 \prod_{j=1}^nd\theta_j,\qquad \theta_1,\ldots, \theta_n\in [0,2\pi).
\end{equation}
Define the random variable $X=X_{\theta_0}$ as the number of eigenvalues of a CUE matrix on the arc $\gamma=\{e^{i\theta}:-\theta_0\leq \theta\leq \theta_0\}$. The average value of $X_{\theta_0}$ is given by
\begin{equation}
\mathbb E_n(X_{\theta_0})=\frac{\theta_0}{\pi}n.
\end{equation}
Define the moment generating function
\begin{equation}
F_n(\lambda)=\mathbb E_n(e^{\lambda X_{\theta_0}}).
\end{equation}
By (\ref{jpdf}), we can write this as
\begin{eqnarray}
F_n(\lambda)&=&\frac{1}{(2\pi)^n n!}\int_{[0,2\pi]^n}\prod_{1\leq j<k\leq n}|e^{i\theta_j}-e^{i\theta_k}|^2 e^{\lambda X_{\theta_0}}\prod_{j=1}^nd\theta_j\\
&=&
\frac{1}{(2\pi)^n n!}\int_{[0,2\pi]^n}\prod_{1\leq j<k\leq n}|e^{i\theta_j}-e^{i\theta_k}|^2 \prod_{j=1}^ne^{\lambda\chi_\gamma(e^{i\theta_j})}d\theta_j,\label{F integral}
\end{eqnarray}
with $\chi_\gamma$ the characteristic function of the arc $\gamma$.
But using the standard multiple integral representation for Toeplitz determinants, we obtain
\begin{equation}
F_n(\lambda)=e^{n\lambda}D_n(s=e^{-\lambda},\theta_0, W=0).
\end{equation}
This is an identity for any $\lambda\geq 0$.
The moment generating function $F_n(\lambda)$ at large positive values of $\lambda$ contains statistical information about large deviations where $X_{\theta_0}$ is much bigger than its average value. As an illustration, a rough estimate of the integral in (\ref{F integral}) gives us the inequality
\begin{equation}\label{est1}
{\rm Prob}_n\left[X_{\theta_0}\geq p\right]\leq e^{-p\lambda}F(\lambda)=e^{(n-p)\lambda}D_n(e^{-\lambda},\theta_0, 0),
\end{equation}
which holds for any $\lambda\geq 0$ and $p\in\{0,1,\ldots, n\}$. Theorem \ref{theorem: extensionWidom} gives asymptotics for $D_n(e^{-\lambda},\theta_0,0)$ for $\lambda\geq nx_c$. For large $n$, the minimum of the right hand side of (\ref{est1}) in the range $\lambda\geq nx_c$ is attained near
\begin{equation}\lambda=nx_c=-2n\ln\tan\frac{\theta_0}{4}.
\end{equation}
Substituting this value for $\lambda$ and the asymptotic behavior for $D_n(e^{-nx_c},\theta_0,0)$, we obtain
\begin{multline}
{\rm Prob}_n\left[X_{\theta_0}\geq p\right]\leq 
\left(\tan\frac{\theta_0}{4}\right)^{-2n^2+2pn}
\left(\sin\frac{\theta_0}{2}\right)^{n^2}
\\
\times\quad n^{-1/4}\left( \cos \frac{\theta_{0}}{2} \right)^{-1/4} 2^{1/12}e^{3\zeta'(-1)}(1+o(1)),\qquad n\to\infty.
\end{multline}
For typical values of $p\sim \frac{\theta_0}{\pi}n$, and even more generally for $p < (1-c)n$ with 
\begin{equation}
c=\frac{\ln\sin\frac{\theta_0}{2}}{2\ln\tan\frac{\theta_0}{4}},
\end{equation} this estimate does not give any useful information since the right hand side is large for large $n$.
We do get non-trivial information for large deviations from the typical value. If we let $p=n-\omega(n)$, we have
\begin{multline}
{\rm Prob_n}\left[X_{\theta_0}\geq p\right]\leq 
\left(\tan\frac{\theta_0}{4}\right)^{-2n\omega(n)}
\left(\sin\frac{\theta_0}{2}\right)^{n^2}
\\
\times \quad n^{-1/4}\left( \cos \frac{\theta_{0}}{2} \right)^{-1/4}2^{1/12}e^{3\zeta'(-1)}(1+o(1)),
\end{multline}
as $n\to\infty$,
and the right hand side decays whenever $\omega(n)<cn$.
If $\omega(n)=o(n)$ as $n\to\infty$, we observe that the probability to have at least $n-\omega(n)$ eigenvalues on $\gamma$ can be estimated by
\begin{equation}
{\rm Prob_n}\left[X_{\theta_0}\geq n-\omega(n)\right]=\bigO(n^{-1/4}e^{n^2\phi(\theta_0)-2n\omega(n)\ln\tan\frac{\theta_0}{4}}), \qquad \mbox{ as $n\to\infty$,}\label{LDP}
\end{equation}
where the rate function $\phi$ is given by $\phi(\theta_0)=\ln\sin\frac{\theta_0}{2}$.
Note that this is not a large deviation principle since we only have an asymptotic upper bound for the left hand side of (\ref{LDP}). A more detailed analysis of the moment generating function $F(\lambda)$ may lead to better estimates, but it is not our aim to proceed in this direction.

\subsection*{Application 2: sine kernel determinants}\label{section: Fredholm}
In the double scaling limit where $n\to\infty$ and $\theta_0$ approaches $\pi$ at rate $\bigO(1/n)$, the Toeplitz determinant $D_n(s,\theta_0,0)$ converges to the Fredholm determinant of the operator $1-(1-s)K_y$, where $K_y$ is the integral operator acting on $(-y,y)$ with kernel $\frac{\sin\pi(x-t)}{\pi(x-t)}$: we have \cite[formulas (236)-(241)]{DIK-hist}
\begin{equation}\label{Toeplitz Fredholm}
\lim_{n\to\infty}D_n(s,\theta_0=\pi (1-\frac{2y}{n}),0)=\det(1-(1-s)K_y),
\end{equation}
where the operator $K_y$ is defined by
\begin{equation}
(K_{y} g)(x)=\int_{-y}^{y}\frac{\sin \pi(x-t)}{\pi(x-t)}g(t)dt.
\end{equation}
Asymptotics for the right hand side of (\ref{Toeplitz Fredholm}) have been studied by Dyson \cite{Dyson}, see also \cite{DIK-hist} for a historical review; the double scaling limit where $y\to\infty$ and simultaneously $s\to 0$ appears to be subtle.
As $y\to\infty$, $s\to 0$ in such a way that $s\leq e^{-2\pi y}$, with
\begin{equation}\label{constant}
c=\frac{1}{12}\ln 2+3\zeta'(-1),
\end{equation} 
we have
\begin{equation}\label{large gap}
\ln \det(1-(1-s)K_y)=-\frac{\pi^{2}y^2}{2}-\frac{1}{4}\ln \pi y+c+o(1).
\end{equation}
Except for $s=0$, (\ref{large gap}) had not been proved rigorously until the recent paper \cite{BDIK}, where double scaling asymptotics were obtained both for $s\leq e^{-2\pi y}$ and for $s> e^{-2\pi y}$. The value of the constant (\ref{constant}) was also proved in \cite{BDIK}.

As a consequence of our Theorem \ref{theorem: extensionWidom}, we re-derive (\ref{large gap}) and (\ref{constant}).
Indeed,
by (\ref{Asymptotic_one_arc}), (\ref{extension Widom}), and (\ref{error large y}), we have 
\begin{equation}\label{large y}
\ln D_n(s,\pi(1-\frac{2y}{n}),0)=n^2\ln\cos\frac{\pi y}{n}-\frac{1}{4}\ln n-\frac{1}{4}\ln\sin\frac{\pi y}{n}+c+\bigO(\frac{y^{1/2}}{n}e^{nx_{c}}s)+ \bigO(y^{-1}),
\end{equation}
with the error term uniform for $n,y$ large, $y^{1/2}/n$ small, and 
\begin{equation}\label{condition sy}s\leq e^{-x_cn}=e^{2n\ln\tan\frac{\theta_0}{4}}=e^{2n\ln\tan\left(\frac{\pi}{4}-\frac{\pi y}{2n}\right)}.\end{equation}
The right hand side of (\ref{condition sy}) can be written as $e^{-2\pi y}\left(1+\bigO(y^2 n^{-1})\right)$ for $n,y$ large and $y^{2}/n$ small. \\
Now, let us choose an arbitrary small function $\epsilon (y) > 0$ and assume $s \leq (1-\epsilon(y))e^{-2\pi y}$. Then for $n$ sufficiently large, i.e.\ for $n>>y^2/\epsilon(y)$, (\ref{condition sy}) holds, and letting $n\to\infty$ in (\ref{large y}), we obtain  
\begin{equation}
\lim_{n\to\infty}\ln D_n(s,\pi(1-\frac{2y}{n}),0)=-\frac{\pi^{2}y^2}{2}-\frac{1}{4}\ln \pi y+c+\bigO(y^{-1}),
\end{equation}
for large $y$.
We thus re-derive (\ref{large gap}) in the double scaling limit where $y\to\infty, s\to 0$ in such a way that $s\leq (1-\epsilon(y))e^{-2\pi y}$, and we confirm the value of the constant (\ref{constant}), using different methods than in \cite{BDIK}.

In \cite{BDIK}, the double scaling asymptotics $y\to\infty, s\to 0$ such that $s\geq e^{-2\pi y}$ have also been considered, and asymptotics for $\det(1-(1-s)K_y)$ in terms of elliptic $\theta$-functions were obtained. This is consistent with Remark \ref{remark: elliptic} and the discussion in Section \ref{section: elliptic}.

\subsection*{Outline}
In Section \ref{section: diff id}, we will derive identities which express $\frac{d}{ds}\ln D_n(s,\theta_0,W)$ in terms of orthogonal polynomials on the unit circle.
In Section \ref{section: eq}, we will study an equilibrium problem which is crucial to obtain asymptotics for the orthogonal polynomials. In Section \ref{section: RH}, we will obtain asymptotics for the orthogonal polynomials as the degree tends to infinity, via the Riemann-Hilbert (RH) method. An important novel feature in the RH analysis is the construction of a {\em modified Bessel parametrix} compared to Bessel parametrices that appeared in the literature before. This modification is needed because $s\neq 0$. The asymptotics for the orthogonal polynomials will lead us towards large $n$ asymptotics for $\frac{d}{ds}\ln D_n(s,\theta_0,W)$. Integrating the differential identity will complete the proof of Theorem \ref{theorem: extensionWidom} 
in Section \ref{section: int} for $\theta_{0}$ fixed. In addition, we explain how the proof can be extended to the case where $\theta_{0} \to \pi$ at a sufficiently slow rate.

\section{Differential identities for Toeplitz determinants}\label{section: diff id}

In this section, we derive an identity for $\frac{d}{ds}\ln D_n(s,\theta_0,W)$ in terms of orthogonal polynomials on the unit circle.
Define $\phi_{j}, \widehat{\phi_j}$, $j=0,1,2,\ldots$ as the family of orthogonal polynomials on the unit circle characterized by
\begin{equation} \label{Orthogonality_condition}
\frac{1}{2\pi} \int_{0}^{2\pi} \phi_{k}(e^{i\theta}) \widehat{\phi_{m}}(e^{-i\theta})f(e^{i\theta})d\theta = \delta_{km}, \hspace{1cm} \forall k,m \in \mathbb{N},
\end{equation}
where the degree of $\phi_{j}, \widehat{\phi_j}$ is $j$, and their leading coefficients $\chi_j$ are equal and positive.

For general weight functions $f(e^{i\theta})$, it is a standard fact that $\phi_n,\widehat\phi_n$ exist and are unique if $D_n(f), D_{n+1}(f)\neq 0$ and that they are given by the determinantal formulas \cite{Simon}
\begin{equation} \label{explicit OP}
\phi_{n}(z) = \frac{1}{\sqrt{D_{n}D_{n+1}}} \left| \begin{array}{c c c c c}
f_0 & f_{-1} & f_{-2} & \cdots & f_{-n} \\
f_{1} & f_{0} & f_{-1} & \cdots & f_{-n+1} \\
\vdots & \vdots & \vdots & \ddots & \vdots \\
f_{n-1} & f_{n-2} & f_{n-3} & \cdots & f_{-1} \\
1 & z & z^2 & \cdots & z^n
\end{array} \right|,
\end{equation}
and
\begin{equation} \label{explicit OP2}
\widehat\phi_{n}(z) = \frac{1}{\sqrt{D_{n}D_{n+1}}} \left| \begin{array}{c c c c c c}
f_0 & f_{-1} & f_{-2} & \cdots & f_{-n+1} &1\\
f_{1} & f_{0} & f_{-1} & \cdots & f_{-n+2}& z\\
\vdots & \vdots & \vdots & \ddots & \vdots& \vdots\\
f_{n} & f_{n-1} & f_{n-2} & \cdots & f_{1}& z^{n}
\end{array} \right|,
\end{equation}
where $D_n=D_n(f)$ is the Toeplitz determinant defined in (\ref{Toeplitz}), and where $f_k$ is, as before, the $k$-th Fourier coefficient of $f$.
If $f(e^{i\theta})$ is  positive, $D_n(f)>0$ for all $n\in\mathbb N$ and the OPs are well-defined. 
For complex $f$, existence and uniqueness of the orthogonal polynomials is not guaranteed for any $n$.
For $f$ given in (\ref{symbol}), it will follow from our analysis that they are well-defined for $n$ sufficiently large and $0\leq s\leq e^{-x_cn}$.
If $W=0$, the symbol $f$ is even in $\theta$ and positive, which implies that $\widehat\phi_k=\phi_{k} = \overline{\phi_{k}}$.

From (\ref{explicit OP}), we see that the leading coefficient $\chi_n$ of $\phi_n$ is equal to
\begin{equation*}
\chi_n = \chi_n(s,\theta_0,W) = \sqrt{\frac{D_{n}(s,\theta_0,W)}{D_{n+1}(s,\theta_0,W)}}.
\end{equation*}
We can thus express $D_n(s,\theta_0,W)$ in terms of the leading coefficients of the OPs if $D_0, D_1, \ldots, D_n\neq 0$:
\begin{equation}\label{Toeplit_express_by_chi_n}
D_{n}(s,\theta_0,W) = \prod_{j=0}^{n-1} \chi_{j}^{-2}.
\end{equation}

To obtain an asymptotic formula for $D_{n}(s,\theta_0,W)$ from this formula, we would need information about all OPs of degree $0$ up to $n-1$. To circumvent this problem, we derive differential identities for $\ln D_{n}(s,\theta_0,W)$.

\subsection{Differential identity for general $W$}

For general sufficiently smooth symbols $f(z;s)$ depending on a parameter $s$, there exists an identity for the logarithmic derivative of $D_n(f)$ with respect to $s$: if $D_{n-1}, D_n, D_{n+1}\neq 0$, we have \cite[Proposition 3.3]{DeiftItsKrasovsky}
\begin{equation*}
\partial_{s} \ln D_{n} (s,\theta_{0},W) = \frac{1}{2\pi i} \int_{S_{1}} z^{-n} \left[ Y^{-1}(z) Y^{\prime}(z) \right]_{21} \partial_s f(z,s) dz,
\end{equation*}
where $S_1$ is the unit circle in the complex plane, $\mbox{ }^{\prime} = \frac{d}{dz}$, $ \partial_s = \frac{d}{ds}$, and
\begin{equation}\label{sol_Y}
Y(z) = \begin{pmatrix} \chi_{n}^{-1} \phi_{n}(z) & \displaystyle \chi_{n}^{-1} \int_{S_{1}} \frac{\phi_{n}(w)}{w-z} \frac{f(w)}{2\pi i w^{n}} dw \\
\displaystyle -\chi_{n-1} z^{n-1} \widehat{\phi}_{n-1}(z^{-1}) & \displaystyle -\chi_{n-1} \int_{S_{1}} \frac{\widehat{\phi}_{n-1}(w^{-1})}{w-z} \frac{f(w)}{2\pi i w} dw
\end{pmatrix}.
\end{equation}
For our symbol given by (\ref{symbol}), this formula reduces to
\begin{equation}\label{diffid1general}
\partial_{s} \ln D_{n} (s,\theta_{0},W) = \frac{1}{2\pi i} \int_{\gamma^{c}} z^{-n} \left[ Y^{-1}(z)Y^{\prime}(z) \right]_{21} e^{W(z)}dz,
\end{equation}
where $\gamma^{c} = S_{1}\setminus \gamma$.

\subsection{Differential identity for $W = 0$}

If $W=0$, the integral at the right hand side of (\ref{diffid1general}) can be simplified. Although (\ref{diffid1general}) will be sufficient for the proof of Theorem \ref{theorem: extensionWidom}, we will show here how to simplify the differential identity if $W=0$. This will allow us to give a more elegant proof of Theorem \ref{theorem: extensionWidom} in this case.

\begin{proposition}
\label{proposition: diff id}
Let $D_n(s,\theta_0,0)$ be the Toeplitz determinant with symbol (\ref{symbol}) in the case where $W(e^{i\theta})= 0$.
We have the following differential identity for $\ln D_{n}$:
\begin{equation}\label{Differential_identity_for_D_n}
\partial_{s} \ln D_n(s,\theta_{0},0) = -2n \frac{\partial_{s} \chi_n}{\chi_n} + \frac{2(1-s)}{\pi} \mbox{\rm Im}\left( \overline{\phi_{n}(e^{i\theta_{0}})} \partial_{s}\phi_{n}(e^{i\theta_{0}}) \right).
\end{equation}
\end{proposition}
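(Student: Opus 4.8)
The plan is to start from the general differential identity \eqref{diffid1general}, which for $W=0$ reads $\partial_s \ln D_n(s,\theta_0,0) = \frac{1}{2\pi i}\int_{\gamma^c} z^{-n}[Y^{-1}(z)Y'(z)]_{21}\,dz$, since $e^{W(z)}=1$. The goal is to rewrite the $21$-entry of the logarithmic derivative $Y^{-1}Y'$ explicitly in terms of the orthogonal polynomials and then evaluate the contour integral over $\gamma^c=S_1\setminus\gamma$.

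First I would compute $[Y^{-1}(z)Y'(z)]_{21}$ from the explicit formula \eqref{sol_Y}. Writing $Y=\begin{pmatrix} Y_{11} & Y_{12}\\ Y_{21} & Y_{22}\end{pmatrix}$, we have $Y^{-1}Y' = (\det Y)^{-1}\begin{pmatrix} Y_{22} & -Y_{12}\\ -Y_{21} & Y_{11}\end{pmatrix}\begin{pmatrix} Y_{11}' & Y_{12}'\\ Y_{21}' & Y_{22}'\end{pmatrix}$, so $[Y^{-1}Y']_{21} = (\det Y)^{-1}(Y_{11}Y_{21}' - Y_{21}Y_{11}')$. Since $Y$ solves a RH problem with jump of determinant one, $\det Y\equiv 1$, so $[Y^{-1}Y']_{21} = Y_{11}Y_{21}' - Y_{21}Y_{11}'$, a Wronskian-type expression. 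With $Y_{11}(z) = \chi_n^{-1}\phi_n(z)$ and $Y_{21}(z) = -\chi_{n-1}z^{n-1}\widehat\phi_{n-1}(z^{-1})$, the entry becomes a polynomial-type object, and $z^{-n}[Y^{-1}Y']_{21}$ will have a pole structure and behavior at $z=0$ and $z=\infty$ that can be read off from the degrees and leading coefficients of $\phi_n,\widehat\phi_{n-1}$.

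Next, since the integrand $z^{-n}[Y^{-1}Y']_{21}$ is analytic in $z$ on $S_1\setminus\{e^{\pm i\theta_0}\}$, I would deform the contour $\gamma^c$ and use residue calculus. The contour $\gamma^c$ is an arc; closing it up (or relating $\int_{\gamma^c}$ to $\int_{S_1}-\int_\gamma$) and using that the full circle integral picks up residues at $z=0$ and possibly $z=\infty$, together with contributions from the endpoints $e^{\pm i\theta_0}$ where the symbol jumps, should produce the two terms on the right-hand side of \eqref{Differential_identity_for_D_n}. The residue at $z=0$ of $z^{-n}$ times a degree-considerations object will, after using $D_n=\prod\chi_j^{-2}$ and differentiating, produce the term $-2n\,\partial_s\chi_n/\chi_n$; indeed $\partial_s\ln D_n$ naturally involves $\partial_s\ln\chi_j$ summed, and the telescoping/recurrence structure of OPUC together with the parameter dependence only through the jump at $e^{\pm i\theta_0}$ should isolate the $-2n\partial_s\chi_n/\chi_n$ piece. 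The endpoint contributions at $e^{\pm i\theta_0}$, where $\partial_s f$ is supported (the symbol equals $s$ on $\gamma^c$, so $\partial_s f = 1$ there, but the relevant boundary terms come from the discontinuity), should combine — using $\widehat\phi_n = \overline{\phi_n}$ valid for $W=0$, and the reflection symmetry $\phi_n(\overline z) = \overline{\phi_n(z)}$ together with $e^{-i\theta_0}$ being the conjugate of $e^{i\theta_0}$ — into $\frac{2(1-s)}{\pi}\,\mathrm{Im}(\overline{\phi_n(e^{i\theta_0})}\,\partial_s\phi_n(e^{i\theta_0}))$. The factor $(1-s)$ presumably arises because on $\gamma$ the symbol is $1$ and on $\gamma^c$ it is $s$, so a jump of size $1-s$ governs the boundary behavior of the associated Cauchy transforms.

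The main obstacle I expect is the careful bookkeeping of the endpoint/boundary contributions at $e^{\pm i\theta_0}$: one must handle the non-integrable-looking behavior of $Y$ near the Fisher–Hartwig points correctly, track the precise constants (the $2/\pi$ and the $\mathrm{Im}$), and verify that the two endpoints contribute conjugate quantities that add to twice an imaginary part. A clean way to organize this is to first establish, via the OPUC recurrence relations and the Christoffel–Darboux structure, an identity for $\partial_s\ln D_n$ in terms of $\partial_s\chi_n$ and boundary values of $\phi_n$ and its Cauchy transform, and only then specialize to the symmetric case $W=0$ where $\widehat\phi_n=\phi_n=\overline{\phi_n}$ collapses the expression. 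Alternatively, one can differentiate the multiple-integral (Andréief/Heine) representation of $D_n(s,\theta_0,0)$ directly with respect to $s$, since $\partial_s$ of the weight is supported on $\gamma^c$, and recognize the resulting expressions as diagonal values of the Christoffel–Darboux kernel evaluated through $\phi_n$; reconciling that computation with the RH-based identity \eqref{diffid1general} gives \eqref{Differential_identity_for_D_n}.
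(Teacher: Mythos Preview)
Your primary strategy---starting from the RH-based identity \eqref{diffid1general} and evaluating the contour integral over $\gamma^c$ by deformation and residues---has a structural gap. The integrand $z^{-n}[Y^{-1}Y']_{21}=z^{-n}(Y_{11}Y_{21}'-Y_{21}Y_{11}')$ is a Laurent polynomial in $z$ (the first column of $Y$ is polynomial), analytic on $S_1$, and contains no $s$-derivative whatsoever; it depends on $s$ only implicitly through the coefficients of the orthogonal polynomials. Any residue or deformation computation you carry out on it will return an expression in $\phi_n,\widehat\phi_{n-1},\chi_n,\chi_{n-1}$ and their $z$-derivatives, never a $\partial_s\chi_n$ or $\partial_s\phi_n(e^{i\theta_0})$. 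The ``endpoint contributions at $e^{\pm i\theta_0}$'' you anticipate do not arise here: those points are merely the endpoints of the arc of integration, not singularities of the integrand, and there is no integration by parts in this picture that would generate boundary terms carrying $\partial_s$. So the mechanism by which the right-hand side of \eqref{Differential_identity_for_D_n} is supposed to emerge is missing.

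The paper's proof is essentially the ``alternative'' you mention at the very end, and that is the route to develop. It does not pass through \eqref{diffid1general} at all. Instead: (i) differentiate $\ln D_n=-2\sum_{j=0}^{n-1}\ln\chi_j$ to get $\partial_s\ln D_n=-2\sum_j\partial_s\chi_j/\chi_j$; (ii) rewrite each summand via orthogonality as $\frac{1}{2\pi}\int_0^{2\pi}\partial_s(\phi_j\overline{\phi_j})\,f\,d\theta$; (iii) apply the Christoffel--Darboux formula to the sum $\sum_{j=0}^{n-1}\phi_j(z)\overline{\phi_j}(z^{-1})$, leaving an expression involving only $\phi_n$ and its $z$-derivative; (iv) split the resulting $\theta$-integral into four pieces and integrate by parts in $\theta$. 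The boundary terms from integration by parts at $\theta_0$ and $2\pi-\theta_0$ are precisely where the factor $(1-s)$ (the jump of $f$) and the quantities $\partial_s\phi_n(e^{\pm i\theta_0})$ enter; the bulk integrals each collapse by orthogonality to $-n\,\partial_s\chi_n/\chi_n$, and together with the $+2n\,\partial_s\chi_n/\chi_n$ coming from the Christoffel--Darboux step this gives the $-2n\,\partial_s\chi_n/\chi_n$ term. The symmetry $\widehat\phi_n=\phi_n=\overline{\phi_n}$ valid for $W=0$ then pairs the two endpoints into $2\,\mathrm{Im}(\cdot)$, yielding \eqref{Differential_identity_for_D_n}.
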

\begin{proof}
Taking the logarithm of both sides in (\ref{Toeplit_express_by_chi_n}), and then differentiating with respect to $s$, we get
\begin{equation*}
\partial_{s} \ln D_n(s,\theta_{0},0) = -2 \sum_{j=0}^{n-1} \frac{\partial_{s}\chi_{j}}{\chi_{j}}.
\end{equation*}
On the other hand, by \eqref{Orthogonality_condition},
\begin{equation*}
\frac{1}{2\pi} \int_{0}^{2\pi} \partial_{s} \left( \phi_{j}(e^{i\theta}) \overline{\phi_{j}(e^{i\theta})} \right) f(e^{i\theta})d\theta = 2\frac{\partial_{s}\chi_{j}}{\chi_{j}},
\end{equation*}
and this gives
\begin{equation}\label{diffid1}
\partial_{s} \ln D_n(s,\theta_{0},0) = - \frac{1}{2\pi} \int_{0}^{2\pi} \partial_{s} \left[ \sum_{j=0}^{n-1} \phi_{j}(e^{i\theta}) \overline{\phi_{j}(e^{i\theta})} \right] f(e^{i\theta})d\theta.
\end{equation}
Here we can use the Christoffel-Darboux formula (see e.g.\ \cite{Simon, Deift2} for a proof of it):
\begin{equation*}
\sum_{j=0}^{n-1} \phi_{j}(z) \overline{\phi_{j}}(z^{-1}) = -n\phi_{n}(z) \overline{\phi_{n}}(z^{-1}) + z\left( \overline{\phi_{n}} (z^{-1}) \phi_{n}^{\prime}(z) - \left( \overline{\phi_{n}}(z^{-1}) \right)^{\prime} \phi_{n}(z) \right).
\end{equation*}
Substituting this into (\ref{diffid1}), we obtain
\begin{align} 
&\partial_{s}\ln D_{n}(s,\theta_{0},0)  =  2n \frac{\partial_{s} \chi_{n}}{\chi_{n}}\nonumber \\ &- \frac{1}{2\pi} \int_{0}^{2\pi} \partial_{s} \left[ e^{i\theta}\left( \overline{\phi_{n}} (e^{-i\theta}) \phi_{n}^{\prime}(e^{i\theta}) - \left. \left( \overline{\phi_{n}}(z^{-1}) \right)^{\prime}\right|_{z=e^{i\theta}} \phi_{n}(e^{i\theta}) \right) \right] f(e^{i\theta})d\theta \nonumber\\
& \qquad =  2n \frac{\partial_{s} \chi_{n}}{\chi_{n}} + I_{1} + I_{2} + I_{3} + I_{4},
\label{diffid2}\end{align}
where
\begin{align}
&I_{1}  =  - \frac{1}{2\pi} \int_{0}^{2\pi} e^{i\theta} \partial_{s} \left( \overline{\phi_{n}} (e^{-i\theta}) \right) \phi_{n}^{\prime} (e^{i\theta}) f(e^{i\theta})d\theta,\\
&I_{2} = -\frac{1}{2\pi} \int_{0}^{2\pi} e^{i\theta} \overline{\phi_{n}}(e^{-i\theta})  \partial_{s} \left( \phi_{n}^{\prime}(e^{i\theta}) \right) f(e^{i\theta})d\theta,\\
&I_{3} = \frac{1}{2\pi} \int_{0}^{2\pi} e^{i\theta} \partial_{s} \left( \left. \left( \overline{\phi_{n}}(z^{-1}) \right)^{\prime} \right|_{z=e^{i\theta}} \right) \phi_{n}(e^{i\theta}) f(e^{i\theta})d\theta,\\
&I_4=\frac{1}{2\pi} \int_{0}^{2\pi} e^{i\theta} \left. \left( \overline{\phi_{n}}(z^{-1}) \right)^{\prime} \right|_{z=e^{i\theta}} \partial_{s} \phi_{n}(e^{i\theta}) f(e^{i\theta})d\theta.
\end{align}
From the orthogonality relation (\ref{Orthogonality_condition}), we easily get
\begin{equation}
I_2=I_3=-n \frac{\partial_s\chi_{n}}{\chi_{n}}.\label{I23}
\end{equation}
The computation of $I_1$ and $I_4$ is slightly more involved. Using (\ref{symbol}), we have
\begin{multline} 
I_{1}  =  - \frac{1}{2\pi i} \int_{0}^{2\pi} \partial_{s} \left( \overline{\phi_{n}} (e^{-i\theta}) \right) \frac{d}{d\theta} \left( \phi_{n} (e^{i\theta}) \right) d\theta \\
+ \frac{1-s}{2\pi i} \int_{\theta_0}^{2\pi-\theta_0} \partial_{s} \left( \overline{\phi_{n}} (e^{-i\theta}) \right) \frac{d}{d\theta} \left( \phi_{n} (e^{i\theta}) \right) d\theta.
\end{multline}
Integrating by parts and then using orthogonality, we obtain
\begin{eqnarray}
I_1&=&\frac{1-s}{2\pi i} \left[ \partial_{s} \left( \overline{\phi_{n}}(e^{-i\theta}) \right) \phi_{n}(e^{i\theta}) \right]_{\theta_{0}}^{2\pi -\theta_{0}} + \frac{1}{2\pi i} \int_{0}^{2\pi} \phi_{n}(e^{i\theta}) \partial_{s}\frac{d}{d\theta} \left( \overline{\phi_{n}} (e^{-i\theta}) \right) f(e^{i\theta}) d\theta\nonumber \\&=& \frac{1-s}{2\pi i} \left[ \partial_{s} \left( \overline{\phi_{n}}(e^{-i\theta}) \right) \phi_{n}(e^{i\theta}) \right]_{\theta_{0}}^{2\pi -\theta_{0}} - n \frac{\partial_s \chi_{n}}{\chi_{n}}.\label{I1}
\end{eqnarray}
In the same way, we show that
\begin{eqnarray}
\begin{array}{r c l}
\displaystyle I_{4} & = & \displaystyle  \frac{1}{2\pi} \int_{0}^{2\pi} e^{i\theta} \left. \left( \overline{\phi_{n}}(z^{-1}) \right)^{\prime} \right|_{z=e^{i\theta}} \partial_{s} \phi_{n}(e^{i\theta}) f(e^{i\theta})d\theta \\
& = &  \displaystyle -\frac{1-s}{2\pi i} \left[ \overline{\phi_{n}}(e^{-i\theta}) \partial_{s} \phi_{n}(e^{i\theta}) \right]_{\theta_{0}}^{2\pi -\theta_{0}} - n \frac{\partial_s \chi_{n}}{\chi_{n}}.\label{I4}
\end{array}
\end{eqnarray}
Summing up (\ref{I23}), (\ref{I1}), and (\ref{I4}), and using the fact that $\overline{\phi_{n}} =  \phi_{n}$ if $W=0$, we get the result.
\end{proof}

As a consequence of Proposition \ref{proposition: diff id} and (\ref{sol_Y}), we can express the right hand side of (\ref{Differential_identity_for_D_n}) in terms of $Y=Y^{(n)}$ given by (\ref{sol_Y}):

\begin{corollary}\label{cor}We have
\begin{equation} \label{D_n_in_term_of_Y}
\partial_{s} \ln D_{n}(s,\theta_{0},0) = n \partial_{s}\ln Y_{12}(0) + \frac{2(1-s)}{\pi} \Im \left( \frac{\overline{Y_{11}(e^{i\theta_{0}})}}{\sqrt{Y_{12}(0)}} \partial_{s} \left( \frac{Y_{11}(e^{i\theta_{0}})}{\sqrt{Y_{12}(0)}} \right) \right).
\end{equation}
\end{corollary}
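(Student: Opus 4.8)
The plan is to obtain Corollary~\ref{cor} as a direct translation of Proposition~\ref{proposition: diff id} into the language of the matrix $Y=Y^{(n)}$ defined in \eqref{sol_Y}. The only ingredients needed are the entrywise relations between $Y$ and the orthogonal polynomials $\phi_n,\widehat\phi_n$, specialised to the case $W=0$, where $\widehat\phi_k=\phi_k=\overline{\phi_k}$. Concretely, from \eqref{sol_Y} one reads off that $Y_{11}(z)=\chi_n^{-1}\phi_n(z)$, so $\phi_n(z)=\chi_n\, Y_{11}(z)$, and one must express $\chi_n$ itself in terms of $Y$.

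First I would identify $\chi_n$ from the $(1,2)$-entry of $Y$ evaluated at $z=0$. From \eqref{sol_Y}, $Y_{12}(0)=\chi_n^{-1}\int_{S_1}\phi_n(w)\,\frac{f(w)}{2\pi i\,w^{n+1}}\,dw$; using the orthogonality relation \eqref{Orthogonality_condition} (expanding $w^{-n-1}$ against the polynomial $\phi_n$, only the top coefficient $\chi_n w^n$ survives the pairing) one gets $\int_{S_1}\phi_n(w)\,\frac{f(w)}{2\pi i\,w^{n+1}}\,dw=\chi_n^{-1}$, hence $Y_{12}(0)=\chi_n^{-2}$. Thus $\chi_n=Y_{12}(0)^{-1/2}$ and $\phi_n(z)=Y_{11}(z)/\sqrt{Y_{12}(0)}$. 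This is the substitution that converts both terms of \eqref{Differential_identity_for_D_n} into the claimed form: $-2n\,\partial_s\chi_n/\chi_n=-2n\,\partial_s\ln\chi_n=n\,\partial_s\ln Y_{12}(0)$, and $\overline{\phi_n(e^{i\theta_0})}\,\partial_s\phi_n(e^{i\theta_0})$ becomes $\overline{\bigl(Y_{11}(e^{i\theta_0})/\sqrt{Y_{12}(0)}\bigr)}\;\partial_s\bigl(Y_{11}(e^{i\theta_0})/\sqrt{Y_{12}(0)}\bigr)$, giving exactly \eqref{D_n_in_term_of_Y} after taking $\mathrm{Im}$ and multiplying by $2(1-s)/\pi$.

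The only genuinely substantive step is the evaluation $Y_{12}(0)=\chi_n^{-2}$; everything else is bookkeeping. I expect the small subtlety there to be the contour/Laurent-expansion argument: one should expand $\frac{1}{w-z}=\sum_{k\ge 0} z^k w^{-k-1}$ near $z=0$, set $z=0$ to keep only the $k=0$ term, and then recognise $\frac{1}{2\pi i}\int_{S_1}\phi_n(w)\,w^{-n-1}f(w)\,dw=\frac{1}{2\pi}\int_0^{2\pi}\phi_n(e^{i\theta})e^{-in\theta}f(e^{i\theta})\,d\theta$, which by \eqref{Orthogonality_condition} (with $\widehat\phi_n(e^{-i\theta})=e^{-in\theta}\cdot(\text{leading coeff }\chi_n)+\text{lower order}$, and orthogonality killing the lower-order part) equals $\chi_n^{-1}$. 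One must be slightly careful that it is the \emph{leading} coefficient of $\widehat\phi_n$, not the full polynomial, that contributes, but since $\chi_n$ is by definition the common leading coefficient of $\phi_n$ and $\widehat\phi_n$ this is immediate. Once $Y_{12}(0)=\chi_n^{-2}$ and $Y_{11}=\chi_n^{-1}\phi_n$ are in hand, the corollary follows by plugging into \eqref{Differential_identity_for_D_n}, with no further analysis required.
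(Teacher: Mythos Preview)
Your proposal is correct and follows exactly the same route as the paper: identify $Y_{12}(0)=\chi_n^{-2}$ and $Y_{11}(e^{i\theta_0})=\chi_n^{-1}\phi_n(e^{i\theta_0})$ from \eqref{sol_Y}, then substitute into \eqref{Differential_identity_for_D_n}. The paper states these two identities without further justification, whereas you spell out the orthogonality argument behind $Y_{12}(0)=\chi_n^{-2}$; otherwise the arguments coincide.
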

\begin{proof}
By \eqref{sol_Y}, we have
\begin{equation*}
Y_{12}(0) = \chi_{n}^{-2},\qquad Y_{11}(e^{i\theta_{0}}) = \chi_{n}^{-1}\phi_{n}(e^{i\theta_{0}}).
\end{equation*}
Substituting this into \eqref{Differential_identity_for_D_n}, we get \eqref{D_n_in_term_of_Y}.
\end{proof}

Integrating both sides from $0$ to $s_0$ in (\ref{diffid1general}) or (\ref{Differential_identity_for_D_n}), we obtain
\begin{equation}\label{integration_diff_identity}
\ln D_n(s_0,\theta_0,W) - \ln D_n(0,\theta_0,W) = \int_{0}^{s_0} \left[ \partial_s \ln D_n (s,\theta_0,W)\right]ds.
\end{equation}

In order to obtain asymptotics for $\ln D_n(s_0,\theta_0,W)$, we need large $n$ asymptotics for the right-hand side of (\ref{diffid1general}) and \eqref{Differential_identity_for_D_n} uniformly for $0 \leq s \leq s_0$.

To obtain large $n$ asymptotics for $Y$, we will use the following RH characterization \cite{FIK}:
$Y(z) = Y^{(n)}(z)$ is the unique $2 \times 2$ matrix-valued function  which satisfies the following properties. 
\subsubsection*{RH problem for $Y$}
\begin{itemize}
\item[(a)] $Y : \mathbb{C}\setminus S_{1} \to \mathbb{C}^{2\times 2}$ is analytic.
\item[(b)] $Y$ has the following jumps:
\begin{equation*}
Y_{+}(z) = Y_{-}(z) \begin{pmatrix}
1 & z^{-n}f(z) \\ 0 & 1
\end{pmatrix}, \hspace{0.5cm} \mbox{ for } z \in S_{1}\setminus \left\{z_{0}=e^{i\theta_0} ,\overline{z_{0}}=e^{-i\theta_0}\right\},
\end{equation*}
where $Y_+(z)$ ($Y_-(z)$) denotes the limit as $z$ is approached from the inside (outside) of the unit circle.
\item[(c)] $Y(z) = \left(I + \bigO(z^{-1})\right)  \begin{pmatrix}
z^{n} & 0 \\ 0 & z^{-n}
\end{pmatrix}$ as $z \to \infty$.
\item[(d)] As $z$ tends to $z_0$ or $z$ tends to $\overline{z_0}$, $Y$ behaves as
\begin{equation*}
\begin{array}{l l}
Y(z) = \begin{pmatrix}
\bigO(1) & \bigO(\ln |z-z_0|) \\ \bigO(1) & \bigO(\ln |z-z_0|)

\end{pmatrix} &  \mbox{ as } z \to z_0, \\ [0.8cm]
Y(z) = \begin{pmatrix}
\bigO(1) & \bigO(\ln |z-\overline{z_0}|) \\ \bigO(1) & \bigO(\ln |z-\overline{z_0}|)
\end{pmatrix} &  \mbox{ as } z \to \overline{z_0}.
\end{array}
\end{equation*}
\end{itemize}

\section{Equilibrium measures}\label{section: eq}

It will be convenient to introduce a new parameter $x \in \mathbb{R}^{+} \cup \{+\infty\}$ defined by
\begin{equation}\label{sx}
s = e^{-xn},
\end{equation}
such that $f$ can be written as $f(e^{i\theta}) = e^{W(e^{i\theta})}e^{-nV(e^{i\theta})}$, where 
\begin{equation}\label{V}
V(e^{i\theta}) = \begin{cases}
0, & \mbox{ for } e^{i\theta} \in \gamma, \\
x, & \mbox{ for } e^{i\theta} \in S_{1} \setminus \gamma.
\end{cases}
\end{equation}
An important ingredient for the large $n$ analysis of the orthogonal polynomials and the RH problem for $Y$ is the following minimization problem:
find the measure $\mu=\mu^{(x)}$ which minimizes 
\begin{equation} \label{inf_prob}
 \iint \log |z-s|^{-1} d\mu(z)d\mu(s) + \int V(z)d\mu(z)
\end{equation}
among all Borel probability measures on the unit circle $S_{1}$.


This measure, which is unique and absolutely continuous with respect to the Lebesgue measure, is called the equilibrium measure, is denoted $d\mu^{(x)}=u^{(x)}(e^{i\theta})d\theta$, and its support is denoted $J^{(x)}$.
The equilibrium measure $\mu=\mu^{(x)}$ and its support $J=J^{(x)}$ are uniquely determined by the following Euler-Lagrange variational conditions \cite{BDJ, SaTo}:
there exists a real constant $\ell=\ell^{(x)}$ such that
\begin{align}
&\label{EL_1}
2 \int_{-\pi}^{\pi} \log | z-e^{i\theta} | d\mu(e^{i\theta}) - V(z) + \ell = 0, &\mbox{ for } z \in J,\\&\label{EL_2}
2 \int_{-\pi}^{\pi} \log | z-e^{i\theta} | d\mu(e^{i\theta}) - V(z) + \ell \leq 0, &\mbox{ for } z \in S_{1}\setminus J.
\end{align}
The support and density of the equilibrium measure can be computed explicitly.

\begin{proposition}\label{thm200}
\begin{itemize}
\item[(a)] For $x = +\infty$, the support of the equilibrium measure and its density are given by
\begin{equation}\label{uinfty}
J=J^{(\infty)}=\gamma,\qquad u(e^{i\theta})=u^{(\infty)}(e^{i\theta}) = \frac{1}{2\pi} \sqrt{\frac{\cos \theta + 1}{\cos \theta - \cos \theta_{0}}}.
\end{equation}
The constant $\ell=\ell^{(\infty)}$ in the variational conditions (\ref{EL_1})-(\ref{EL_2}) is given by
\begin{equation}\label{linfy}
\ell^{(\infty)} = - 2 \ln \sin \frac{\theta_{0}}{2},
\end{equation}
and the variational inequality (\ref{EL_2}) is strict for $z\in S_1\setminus{\gamma}$.
\item[(b)]Let $x_c$ be given by (\ref{xcintro}). For $x \geq x_{c}$, we have the same result as for $x=+\infty$:
\begin{equation}\label{ux1}
J = \gamma \hspace{0.1cm},\qquad u(e^{i\theta}) = u^{(\infty)}(e^{i\theta}),\qquad \ell = \ell^{(\infty)}.
\end{equation}
Moreover, the variational inequality (\ref{EL_2}) is strict for $z\in S_1\setminus{\gamma}$ if $x\neq x_c$; if $x=x_c$, it is strict for $z\in S_1\setminus\left(\gamma \cup\{-1\}\right)$, and there is equality for $z=-1$.
\item[(c)] For $0 < x < x_{c}$, the support of $\mu$ consists of two disjoint arcs: we have
\begin{equation}\label{ux2}
J = \{ e^{i\theta} : \theta \in [-\theta_{0},\theta_{0}] \cup [\pi-\theta_{1},\pi + \theta_{1}] \} ,
\end{equation}
where $\theta_1=\theta_1(x)\in (0,\pi-\theta_0)$ is the unique solution of 
\begin{equation}\label{bijection_theta1_x}
2\int_{[-\theta_{0},\theta_{0}] \cup [\pi-\theta_{1},\pi + \theta_{1}]} \log \left| \frac{1+e^{i\theta}}{1-e^{i\theta}} \right| u(e^{i\theta}) d\theta = x,
\end{equation}
and the density is given by 
\begin{equation}
u(e^{i\theta}) = \frac{1}{2\pi} \sqrt{\frac{\cos \theta + \cos \theta_{1}}{\cos \theta - \cos \theta_{0}}}.
\end{equation}
The constant $\ell$ is given by
\begin{equation}\label{lx}
\ell = - \int_{0}^{1} \frac{1}{s} \left( 1 - \sqrt{\frac{s^2 + 2\cos (\theta_{1}) s +1}{s^2 - 2\cos (\theta_{0}) s +1}} \right) ds,
\end{equation}
and the variational inequality (\ref{EL_2}) is strict for $z\in S_1\setminus{J}$.
\end{itemize}
\end{proposition}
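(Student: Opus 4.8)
The plan is to construct the equilibrium measure explicitly in each regime and then verify the Euler–Lagrange conditions \eqref{EL_1}–\eqref{EL_2} directly, rather than appeal to abstract existence/uniqueness beyond what is needed. The natural tool is the map to the real line: write $z = e^{i\theta}$ and $\xi = \cos\theta$, under which the unit circle (parametrized symmetrically) corresponds to $\xi \in [-1,1]$, the arc $\gamma$ corresponds to $\xi \in [\cos\theta_0, 1]$, and the external potential $V$ becomes a step function jumping from $0$ to $x$ at $\xi = \cos\theta_0$. In these coordinates the minimization \eqref{inf_prob} becomes (up to a known constant) a one-cut or two-cut logarithmic equilibrium problem on the real line with a piecewise-constant potential, whose density is of the standard square-root type. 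This is how one guesses the Ans\"atze $u^{(\infty)}(e^{i\theta}) = \frac{1}{2\pi}\sqrt{\frac{\cos\theta+1}{\cos\theta-\cos\theta_0}}$ for the one-arc case and $u(e^{i\theta}) = \frac{1}{2\pi}\sqrt{\frac{\cos\theta+\cos\theta_1}{\cos\theta-\cos\theta_0}}$ for the two-arc case: the numerator vanishes at $\theta = \pi$ (resp. $\theta = \pi\pm\theta_1$), the soft edge, while $\theta = \pm\theta_0$ is a hard edge where the density blows up.

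For part (a), $x = +\infty$: here $V = 0$ on $\gamma$ and $V = +\infty$ off $\gamma$, so any measure of finite energy must be supported on $\gamma$, and the problem reduces to the pure logarithmic equilibrium problem on the arc $\gamma$. I would first check that $u^{(\infty)}$ as given is a probability measure — this is the normalization integral $\frac{1}{2\pi}\int_{-\theta_0}^{\theta_0}\sqrt{\frac{\cos\theta+1}{\cos\theta-\cos\theta_0}}\,d\theta = 1$, a standard computation via the substitution $\cos\theta = \cos\theta_0 + (1-\cos\theta_0)t$ or via residues. Then I would verify \eqref{EL_1} on $\gamma$: the logarithmic potential $U^\mu(z) = 2\int\log|z-e^{i\theta}|d\mu(e^{i\theta})$ is constant on $\gamma$, and the constant equals $-\ell^{(\infty)} = 2\ln\sin\frac{\theta_0}{2}$; this is most cleanly done by computing the (complex) resolvent $G(z) = \int \frac{d\mu(w)}{z-w}$ via the square-root, recognizing that $G'$ equals a rational function times the radical, and integrating. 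Finally \eqref{EL_2}: I would show the potential $2U^\mu(z) - V(z) + \ell = 2U^\mu(z) - \infty$ is trivially strictly below $0$ off $\gamma$ — but more relevantly this computation sets up part (b).

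For part (b), the key point is that the \emph{same} measure $u^{(\infty)}$ remains the equilibrium measure for all $x \geq x_c$; what must change is only the inequality \eqref{EL_2}. With $V$ now equal to the finite constant $x$ off $\gamma$, \eqref{EL_2} reads $2U^{\mu^{(\infty)}}(z) + \ell^{(\infty)} \leq x$ for $z \in S_1\setminus\gamma$, i.e. $h(z) := x - 2U^{\mu^{(\infty)}}(z) - \ell^{(\infty)} \geq 0$. By symmetry and monotonicity of $U^{\mu^{(\infty)}}$ along the complementary arc, $h$ attains its minimum at $z = -1$, and the threshold value is exactly the one for which $h(-1) = 0$ — this is the definition of $x_c$, and the computation $x_c = 2U^{\mu^{(\infty)}}(-1) + \ell^{(\infty)} = -2\ln\tan\frac{\theta_0}{4}$ is the content of \eqref{xcintro} that needs to be checked (again via the resolvent, evaluating $U^{\mu^{(\infty)}}(-1) = \int\log|{-1}-e^{i\theta}|\,d\mu$ explicitly). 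For $x > x_c$ the inequality is strict everywhere off $\gamma$; at $x = x_c$ it degenerates to equality only at $-1$. Uniqueness of the minimizer (a strictly convex problem) then forces $\mu^{(x)} = \mu^{(\infty)}$ throughout. I would also note that for $x < x_c$ this Ansatz \emph{fails} — $h(-1) < 0$ — which forces the support to grow, motivating part (c).

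For part (c), $0 < x < x_c$: the support opens up a second band near $-1$, symmetric, of angular half-width $\theta_1$. I would impose the two-arc Ansatz $u(e^{i\theta}) = \frac{1}{2\pi}\sqrt{\frac{\cos\theta+\cos\theta_1}{\cos\theta-\cos\theta_0}}$ on $J = \{e^{i\theta}: \theta\in[-\theta_0,\theta_0]\cup[\pi-\theta_1,\pi+\theta_1]\}$ and determine $\theta_1$ by two conditions: (i) total mass one, and (ii) the variational constant $\ell$ must be the \emph{same} on both bands — this second condition is what pins down $\theta_1$, and after the resolvent computation it takes the form \eqref{bijection_theta1_x}, an equation whose left-hand side is continuous and strictly monotone in $\theta_1$ on $(0,\pi-\theta_0)$, ranging over $(0,x_c)$ as $\theta_1$ ranges over its interval, so there is a unique solution $\theta_1(x)$ for each $x\in(0,x_c)$. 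Concretely I would introduce $G(z) = \int\frac{d\mu(w)}{z-w}$, guess $G(z) = \frac{1}{2}\left(\frac{z+\cos\theta_1 \cdot(\cdots)}{z}\right)\cdot R(z)^{-1}$-type with $R(z) = \sqrt{(z^2-2\cos\theta_0 z+1)(z^2+2\cos\theta_1 z+1)}$ behaving appropriately at $z=0,\infty$ and on the circle, read off mass-one from the $1/z$ behavior at infinity, and integrate $G$ to get the potential and hence $\ell$; the expression \eqref{lx} for $\ell$ comes from writing this integral from $0$ to $1$ along the radius.

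The main obstacle I anticipate is not any single step but the bookkeeping of the contour integrals / resolvent computations: correctly handling the branch of the square root $\sqrt{(\cos\theta+\cos\theta_1)/(\cos\theta-\cos\theta_0)}$ on the several arcs, getting the signs in the Plemelj/Sokhotski relations right when converting between the density $u$, the resolvent $G$, and the potential $U^\mu$, and verifying that the candidate $G$ has no spurious poles (in particular the residue condition at $z = 0$, which is where the $\cos\theta+1$ vs $\cos\theta+\cos\theta_1$ distinction between parts (a)/(b) and (c) originates). The strict-inequality claims in \eqref{EL_2} away from the degenerate points reduce, after all the computation, to sign analysis of an explicit elementary function on an interval, which is routine once the function is in hand. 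I would organize the write-up as: (1) reduce to the real-line problem and state the resolvent Ansatz; (2) part (a) normalization + verification; (3) part (b) via the single inequality $h(z)\ge 0$ and the computation of $x_c$; (4) part (c) via the two-band resolvent, the equation for $\theta_1$, and monotonicity; appealing throughout to the convexity/uniqueness in \cite{SaTo} to conclude that the verified candidate is \emph{the} equilibrium measure.
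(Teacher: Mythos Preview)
Your proposal is correct and follows essentially the same strategy as the paper: posit the explicit density in each regime and verify the Euler--Lagrange conditions \eqref{EL_1}--\eqref{EL_2} via the Cauchy transform, reading off $\ell$ and the threshold $x_c$ from explicit integrals. The paper's execution stays on the unit circle rather than passing to the real line: it writes the angular derivative $\frac{d}{d\alpha}\bigl(2\int\log|e^{i\alpha}-e^{i\theta}|\,d\mu\bigr)$ as a single contour integral involving $\bigl((\xi-z_1)(\xi-\overline{z_1})/(\xi-z_0)(\xi-\overline{z_0})\bigr)^{1/2}$ and evaluates by residues, treating all three cases at once---this sidesteps the change-of-variables bookkeeping you flag, and the normalization (your condition (i)) comes for free from the $\xi\to\infty$ behavior of the radical, leaving only the band-matching condition to fix $\theta_1$; otherwise it is the same computation you outline.
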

\begin{remark}
Although part (c) of the proposition is not needed for the proof of Theorem \ref{theorem: extensionWidom}, we present it here for completeness and to support the heuristic arguments in Section \ref{section: elliptic}, where we will discuss asymptotics for $D_n$ if $x< x_c$.
\end{remark}
\begin{proof}
Note first that the equation (\ref{bijection_theta1_x}) has indeed a unique solution $\theta_1$ for $x<x_c$, since the function
\begin{equation*}
\theta_{1} \mapsto 2\int_{[-\theta_{0},\theta_{0}] \cup [\pi-\theta_{1},\pi + \theta_{1}]} \log \left| \frac{1+e^{i\theta}}{1-e^{i\theta}} \right| u(e^{i\theta}) d\theta
\end{equation*} decreases as a function of $\theta_{1} \in (0,\pi-\theta_{0})$, and is bijective from $(0,\pi-\theta_{0})$ to $(0,x_c)$. 

\medskip

The equilibrium measure $\mu$ is uniquely characterized by the conditions \eqref{EL_1}-\eqref{EL_2}, which means that it is sufficient for us to show that the measure $\mu$ defined in cases (a), (b), and (c) satisfy these variational conditions. 
To do so, consider the function
\begin{equation}
f(z) =2\int \log | z-e^{i\theta}| d\mu^{(x)}(e^{i\theta}).
\end{equation}
If we define $\theta_1>0$ as the unique solution of (\ref{bijection_theta1_x}) if $x<x_c$, and if we let $\theta_1=0$ for $x\geq x_c$, we have
\begin{equation}
f(e^{i\alpha})=\frac{1}{\pi}\int_{[-\theta_0,\theta_0]\cup[\pi-\theta_1,\pi+\theta_1]} \log | e^{i\alpha}-e^{i\theta}|\sqrt{\frac{\cos \theta + \cos \theta_{1}}{\cos \theta - \cos \theta_{0}}}d\theta.
\end{equation}
The derivative $\frac{d}{d\alpha} f(e^{i\alpha})$ can be written as a contour integral 
\begin{equation}
\frac{d}{d\alpha} f(e^{i\alpha})=-\frac{1}{2\pi i}\int_\Sigma \frac{1}{\xi}\frac{\xi+e^{i\alpha}}{\xi-e^{i\alpha}}\left(\frac{(\xi-z_1)(\xi-\overline{z_1})}{(\xi-z_0)(\xi-\overline{z_0})}\right)^{1/2}d\xi, \qquad z_{1} = e^{i\theta_{1}},
\end{equation}
where the square root is analytic off $J$ and tends to $1$ as $\xi\to\infty$, and
where the contour $\Sigma$ consists of one (if $x\geq x_c$) or two (if $x<x_c$) counterclockwise oriented circles around the arc(s) of $J$. If $e^{i\alpha}\in S_1\setminus J$, the contour has to be chosen sufficiently small such that $e^{i\alpha}$ lies in the exterior of $\Sigma$.

If $e^{i\alpha} \in J$, a residue calculation shows that $\frac{d}{d\alpha} f(e^{i\alpha})=0$.
If $e^{i\alpha} \in S_{1}\setminus J$ on the other hand, we have
\begin{equation} \label{derivative_of_EL}
\frac{d}{d\alpha} f(e^{i\alpha})  
 = \begin{cases}
 \sqrt{\frac{\cos \theta_{1} + \cos \alpha}{ \cos \theta_{0} - \cos \alpha}}, & \mbox{ if } \theta_{0} < \alpha < \pi-\theta_{1}, \\[0.4cm]
 -\sqrt{\frac{\cos \theta_{1} + \cos \alpha}{ \cos \theta_{0} - \cos \alpha}}, & \mbox{ if } \pi+\theta_{1} < \alpha < 2\pi-\theta_{0}.
\end{cases} 
\end{equation}
It follows that $f(e^{i \alpha})$ is constant on $[-\theta_0,\theta_0]\cup[\pi-\theta_1,\pi+\theta_1]$, and that it achieves its maximum on $[\pi-\theta_1,\pi+\theta_1]$ (i.e.\ at $\pi$ if $x\geq x_c$). 
We have
\begin{align}
&\label{f1}f(e^{i\alpha})=f(1),&\mbox{ for $-\theta_0\leq \alpha\leq \theta_0$},\\
&f(e^{i\alpha})=f(-1),&\mbox{ for $\pi-\theta_1\leq \alpha\leq \pi+\theta_1$},\\
&\label{f3}f(e^{i\alpha})< f(-1),&\mbox{ for $\alpha\notin [\pi-\theta_1,\pi+\theta_1]$}.
\end{align}
If we show that 
\begin{align}
&\label{toshowf1}f(1)=-\ell, \ f(-1)<f(1)+x, & \mbox{ for $x>x_c$},\\
&\label{toshowf2}f(1)=-\ell, \ f(-1)=f(1)+x, & \mbox{ for $x\leq x_c$},
\end{align}
then (\ref{f1})-(\ref{f3}) imply the Euler-Lagrange conditions (\ref{EL_1})-(\ref{EL_2}).

For the case $x\geq x_c$, using  (\ref{derivative_of_EL}), we obtain after a straightforward calculation, \begin{equation*}
f(-1) - f(1) =f(e^{i\pi})-f(e^{i\theta_0})= \int_{\theta_0}^{\pi} \sqrt{\frac{1+\cos \theta}{\cos \theta_{0} - \cos \theta}}d\theta= -2 \ln \tan \frac{\theta_{0}}{4} = x_c \leq x,
\end{equation*}
which proves the inequality in (\ref{toshowf1}), and the fact that there is equality at $-1$ if $x=x_c$.
Moreover, using an other residue calculation, we get
\begin{equation*}
f(1) = \int_{0}^{1} f^{\prime}(s) ds = \int_{0}^{1} \frac{1}{s}\left( 1 - \frac{1+s}{\sqrt{s^{2}-2\cos (\theta_{0})s + 1}} \right)ds = 2\ln \sin \frac{\theta_{0}}{2} = -\ell^{(\infty)}.
\end{equation*}
This proves the equality in (\ref{toshowf1}).
For $0 < x < x_{c}$, by
(\ref{derivative_of_EL}), 
\begin{equation*}
f(-1) - f(1) =f(e^{i(\pi-\theta_1)})-f(e^{i\theta_0})= \int_{\theta_0}^{\pi-\theta_1} \sqrt{\frac{\cos\theta_1+\cos \theta}{\cos \theta_{0} - \cos \theta}}d\theta = x,
\end{equation*}
by definition of $f$ and $\theta_{1}$,
and 
\begin{equation*}
f(1)= \int_{0}^{1} \frac{1}{s}\left( 1 - \sqrt{\frac{s^2 + 2\cos (\theta_{1}) s +1}{s^2 - 2\cos (\theta_{0}) s +1}} \right) ds = -\ell.
\end{equation*}
This completes the proof.
\end{proof}

\section{Riemann-Hilbert analysis for $x\geq x_c$}\label{section: RH}
We will now perform an asymptotic analysis of the RH problem for $Y$ as $n\to\infty$ with $x\geq x_c$. Our analysis uses the Deift/Zhou steepest descent method \cite{Deift, DKMVZ2, DKMVZ1} and shows many similarities with the analysis done in \cite{Krasovsky} for $s=0$ (or $x=+\infty$). An important difference is that we need to modify the Bessel parametrices around the points $z_0, \overline{z_0}$, see Section \ref{section: Bessel} below.

\subsection{First transformation $Y \to T$}
We define $g$ by
\begin{equation}\label{g}
g(z)=\int_{-\theta_{0}}^{\theta_{0}}\log(z-e^{i\theta})d\mu^{(\infty)}(e^{i\theta}).
\end{equation}
Below we write $\mu=\mu^{(\infty)}$, $u=u^{(\infty)}$, and $\ell=\ell^{(\infty)}$.
We have that $e^g$ is analytic for $z\in\mathbb C\setminus\gamma$, and by the Euler-Lagrange variational conditions (\ref{EL_1})-(\ref{EL_2}), we also have 
\begin{align}
&\label{g1}g_{+}(z) + g_{-}(z) - \log(z) -i\pi + \ell = 0 ,&&\mbox{ for } z\in \gamma\setminus\{z_{0},\overline{z_{0}}\},\\
&\label{g2}2g(z)  - \log(z) -i\pi+\ell<0,&&\mbox{ for }z\in S_{1}\setminus \gamma,\\
&\label{g3}g_{+}^{\prime}(z) - g_{-}^{\prime}(z)=-\frac{2\pi}{z}u(z),&&\mbox{ for }z\in \gamma\setminus\{z_{0},\overline{z_{0}}\}.
\end{align}
Define
\begin{equation}\label{def T}
T(z) = e^{\frac{n\pi i}{2}\sigma_{3}} e^{\frac{n\ell}{2}\sigma_{3}} Y(z) e^{-ng(z)\sigma_{3}} e^{-\frac{n\ell}{2}\sigma_{3}}e^{-\frac{n\pi i}{2}\sigma_{3}}.
\end{equation}
Then it is straightforward to check that $T$ satisfies the following conditions.

\subsubsection*{RH problem for $T$}
\begin{itemize}
\item[(a)] $T : \mathbb{C}\setminus S_{1} \to \mathbb{C}^{2\times 2}$ is analytic.
\item[(b)] $T$ satisfies the jump relation
\begin{equation}
T_{+}(z) = T_{-}(z) J_{T}(z),\qquad \mbox{ on } S_{1}\setminus \left\{z_{0},\overline{z_{0}}\right\},
\end{equation}
with
\begin{equation*}
J_{T}(z) = \begin{pmatrix}
\displaystyle e^{-n(g_{+}(z)-g_{-}(z))} & \displaystyle (-1)^{n}z^{-n}e^{-nV(z)}e^{n\ell}e^{n(g_{+}(z)+g_{-}(z))}e^{W(z)} \\
\displaystyle 0 & \displaystyle e^{n(g_{+}(z)-g_{-}(z))}
\end{pmatrix}.
\end{equation*}
\item[(c)] $T(z) = I + \bigO(z^{-1})$ as $z \to \infty$.
\item[(d)] As $z\to z_0=e^{i\theta_0}$ or $z\to\overline {z_0}=e^{-i\theta_0}$, we have
\begin{equation*}
\begin{array}{l l}
T(z) = \begin{pmatrix}
\bigO(1) & \bigO(\ln |z-z_0|) \\ \bigO(1) & \bigO(\ln |z-z_0|)
\end{pmatrix},  &  \mbox{ as } z \to z_0, \\ [0.8cm]
T(z) = \begin{pmatrix}
\bigO(1) & \bigO(\ln |z-\overline{z_0}|) \\ \bigO(1) & \bigO(\ln |z-\overline{z_0}|)
\end{pmatrix}, &  \mbox{ as } z \to \overline{z_0}.
\end{array}
\end{equation*}
\end{itemize}

We define 
\begin{equation}\label{def phi}
\phi(z)=2g(z)-\log z -i\pi+\ell, 
\end{equation}
so that $e^\phi$ is analytic in a neighborhood of $\gamma$ with the exception of $\gamma$ itself.
For $z\in\gamma\setminus\{z_{0},\overline{z_{0}}\}$, we have
\begin{eqnarray}\label{phi g}
\phi_+(z)&=&(g_+(z)+g_-(z)-\log z-i\pi+\ell)+(g_+(z)-g_-(z))\\
&=&g_+(z)-g_-(z),
\end{eqnarray}
by (\ref{g1}). But $g_+(z_0)=g_-(z_0)$, and integrating (\ref{g3}) between $z_0$ and $z$, and then substituting it into (\ref{phi g}), we obtain
\begin{equation}
\phi_+(z)=-2\pi\int_{z_0}^z\frac{u(\xi)}{\xi}d\xi.
\end{equation}
Analytically continuing the left and right hand side, we obtain
\begin{equation}\label{phi int}
\phi(z)=\int_{z_0}^{z} \frac{\xi+1}{\left((\xi-z_{0})(\xi-\overline{z_{0}})\right)^{1/2}} \frac{d\xi}{\xi},
\end{equation}
where the branch cut of the square root is chosen on $\gamma$. It is now convenient to express $J_T$ in terms of $\phi$:
\begin{equation*}
J_{T}(z) = \begin{cases} \begin{pmatrix}
e^{-n\phi_+(z)} & e^{W(z)} \\
0 & e^{-n\phi_-(z)}
\end{pmatrix}, & z \in \gamma\setminus\{z_{0},\overline{z_{0}}\}, \\
\begin{pmatrix}
1 & e^{-nx}e^{n\phi(z)}e^{W(z)} \\
0 & 1
\end{pmatrix}, & z \in S_{1} \setminus \gamma.
\end{cases}
\end{equation*}

\subsection{Second transformation $T \to S$}

We can factorize $J_{T}$ on $\gamma$ as follows:
\begin{multline}
\begin{pmatrix}
e^{-n\phi_+(z)} & e^{W(z)} \\
0 & e^{-n\phi_-(z)}
\end{pmatrix} = \begin{pmatrix}
1 & 0 \\ e^{-n\phi_-(z)}e^{-W(z)} & 1
\end{pmatrix}\\
\times\ \begin{pmatrix}
0 & e^{W(z)} \\ -e^{-W(z)} & 0
\end{pmatrix}\begin{pmatrix}
1 & 0 \\ e^{-n\phi_+(z)}e^{-W(z)} & 1
\end{pmatrix}.
\end{multline}
Using this factorization, we can split the jump on $\gamma$ into three different jumps on a lens-shaped oriented contour, see Figure \ref{fig_S}. It is important that the jump contour lies in the region where $W$ is analytic.
Denote by $\gamma_{+}$ and $\gamma_{-}$ the lenses around $\gamma$ on the $|z|<1$ side and the $|z|>1$ side respectively. Define
\begin{equation}\label{def S}
S(z) = \left\{ \begin{array}{l l}

T(z) \begin{pmatrix}
1 & 0 \\ -e^{-n\phi(z)}e^{-W(z)} & 1
\end{pmatrix}, &
|z|<1, z \mbox{ inside the lenses around }\gamma, \\[0.6cm]
T(z) \begin{pmatrix}
1 & 0 \\ e^{-n\phi(z)}e^{-W(z)} & 1
\end{pmatrix}, &
|z|>1, z \mbox{ inside the lenses around }\gamma, \\[0.6cm]

T(z),  &   z \mbox{ outside the lenses.}                 \\

\end{array} \right.
\end{equation}
Then $S$ solves the following RH problem.
\subsubsection*{RH problem for $S$}


\begin{figure}[t]
    \begin{center}
    \setlength{\unitlength}{1truemm}
    \begin{picture}(100,55)(-5,10)
        \cCircle(50,40){25}[f]
        \put(65,60){\thicklines\circle*{1.2}}
        \put(65,19.8){\thicklines\circle*{1.2}}
	    \qbezier(65,60)(73,40)(65,19.8)
	    \qbezier(65,60)(100,40)(65,19.8)
        \put(65,61){$z_0$}
        \put(72,36){$\gamma$}
        \put(65,36){$\gamma_+$}
        \put(78.5,36){$\gamma_-$}
        \put(21,36){$\gamma^c$}      
        \put(65,17){$\overline{z_0}$}
        \put(75,41){\thicklines\vector(0,1){.0001}}
        \put(69,41){\thicklines\vector(0,1){.0001}}
        \put(82.5,41){\thicklines\vector(0,1){.0001}}
        \put(25,39){\thicklines\vector(0,-1){.0001}}
    \end{picture}
    \caption{The jump contour for $S$.}
    \label{fig_S}
\end{center}
\end{figure}
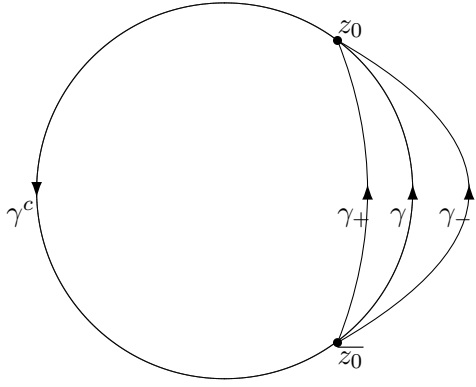

\begin{itemize}
\item[(a)] $S : \mathbb{C}\setminus (S_{1}\cup \gamma_{-} \cup \gamma_{+}) \to \mathbb{C}^{2\times 2}$ is analytic.
\item[(b)] $S$ satisfies the jump relations
\begin{equation}\label{jump S}
S_{+}(z) = S_{-}(z)J_S(z),\qquad \mbox{ for $z\in S_1\cup\gamma_-\cup\gamma_+$},
\end{equation}
where $S_+$ ($S_-$) denotes the boundary value from the left (right) of the contour, and where
\begin{equation}\label{JS}
J_S(z) = \begin{cases}\begin{pmatrix}
1 & e^{-nx}e^{n\phi(z)}e^{W(z)} \\ 0 & 1
\end{pmatrix}, & \mbox{ for } z \in S_{1}\setminus \gamma, \\
\begin{pmatrix}
0 & e^{-W(z)} \\ -e^{-W(z)} & 0
\end{pmatrix},  & \mbox{ for } z \in \gamma\setminus\{z_{0},\overline{z_{0}}\}, \\
\begin{pmatrix}
1 & 0 \\ e^{-n\phi(z)}e^{-W(z)} & 1
\end{pmatrix},  & \mbox{ for } z \in \gamma_{+}\cup\gamma_-.
\end{cases}
\end{equation} 
\item[(c)] $S(z) = I + \bigO(z^{-1})$ as $z \to \infty$.
\item[(d)] As $z\to z_0=e^{i\theta_0}$ or $z\to\overline {z_0}=e^{-i\theta_0}$, we have
\begin{equation}
S(z)=\bigO(\ln|z-z_0|),\qquad S(z)=\bigO(\ln|z-\overline{z_0}|).
\end{equation}
\end{itemize}

By (\ref{def phi}) and (\ref{g2}), we observe that the jump matrices for $S$ converge to the identity matrix on $S_{1}\setminus \gamma$ as $n\to\infty$, except at $-1$ if $x=x_c$.
On $\gamma_{+} \cup \gamma_{-}$, one shows that $\Re\phi(z)>0$ and consequently the jump matrices for $S$ also converge to the identity matrix on $\gamma_{+} \cup \gamma_{-}$.
The convergence of the jump matrices is point-wise in $z$ and breaks down as $z$ approaches $z_0, \overline{z_0}$, and also as $z$ approaches $-1$ if $x=x_c$. Therefore, we will need to construct approximations to $S$ for large $n$ in different regions of the complex plane: local parametrices will be constructed in small disks $D(z_0,r), D(\overline{z_0},r), D(-1,r)$ surrounding $z_0, \overline{z_0}, -1$, and a global parametrix will be constructed in $\mathbb C\setminus(\overline{D(z_0,r)\cup D(\overline{z_{0}},r)\cup D(-1,r)})$.

\subsection{Global parametrix}
Ignoring the exponentially small jumps for $S$ and small neighborhoods of $z_0, \overline{z_0}$, $-1$, we are led to the following RH problem:
\subsubsection*{RH problem for $P^{(\infty)}$}
\begin{itemize}
\item[(a)] $P^{(\infty)} : \mathbb{C}\setminus \gamma \to \mathbb{C}^{2\times 2}$ is analytic.
\item[(b)] $P^{(\infty)}$ has the jump
\begin{equation}\label{jump Pinfty}
P^{(\infty)}_+(z)=P^{(\infty)}_-(z)\begin{pmatrix} 0&e^{W(z)}\\-e^{-W(z)}&0
\end{pmatrix},\qquad z\in\gamma\setminus\{z_{0},\overline{z_{0}}\}.
\end{equation}
\item[(c)] $P^{(\infty)}(z) = I + \bigO(z^{-1})$ as $z \to \infty$.
\item[(d)] As $z\to z_0$ or $z\to \overline{z_0}$, we have
\begin{equation}
P^{(\infty)}(z)=\bigO(|z-z_0|^{-1/4}),\qquad P^{(\infty)}(z)=\bigO(|z-\overline{z_0}|^{-1/4}).
\end{equation}
\end{itemize}

The solution of this RHP is explicitly given by
\begin{equation}
P^{(\infty)}(z)= e^{h_\infty\sigma_3}\begin{pmatrix}
\frac{1}{2} \left( \beta(z) + \beta^{-1}(z) \right) & -\frac{1}{2i} \left( \beta(z) - \beta^{-1}(z) \right) \\
\frac{1}{2i} \left( \beta(z) - \beta^{-1}(z) \right) & \frac{1}{2} \left( \beta(z) + \beta^{-1}(z) \right)
\end{pmatrix}e^{-h(z)\sigma_3} ,
\end{equation}
where $\beta(z) = \left(\frac{z-\overline{z_{0}}}{z-z_{0}}\right)^{1/4}$ is analytic in $\mathbb{C}\setminus \gamma$ and $\beta(z) \to 1$ as $z\to\infty$, 
where $h_\infty=\lim_{z\to\infty}h(z)$, and
\begin{equation}\label{h}
h(z)=\frac{\left((z-z_0)(z-\overline{z_0})\right)^{1/2}}{2\pi i}\int_{\gamma}\frac{W(\xi)}{\left((\xi-z_0)(\xi-\overline{z_0})\right)_+^{1/2}}\frac{1}{\xi-z}d\xi,
\end{equation}
with $\left((z-z_0)(z-\overline{z_0})\right)^{1/2}$ analytic off $\gamma$. Note that $h$ is analytic in $\mathbb C\setminus \gamma$, bounded near $z_0$, $\overline{z_0}$, and $\infty$, and that it satisfies the jump relation 
\[h_+(z)+h_-(z)=W(z),\qquad z\in\gamma\setminus \{z_{0},\overline{z_{0}}\}.\] Using these properties, it is straightforward to verify that $P^{(\infty)}$ solves the above RH problem.

\subsection{Local parametrix near $z_{0}$}\label{section: Bessel}
We want to construct a function $P$ defined in $D(z_0,r)$ satisfying the following RH conditions.
\subsubsection*{RH problem for $P$}

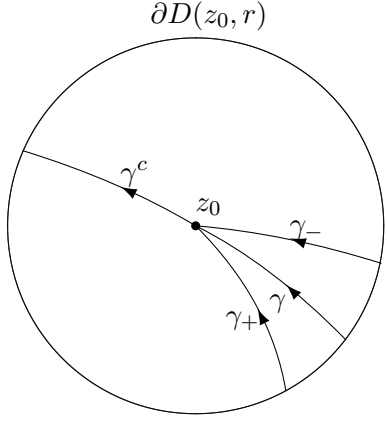
\begin{figure}[t]
    \begin{center}
    \setlength{\unitlength}{1truemm}
    \begin{picture}(100,55)(-5,10)
        \cCircle(50,40){25}[f]
        \qbezier(50,40)(40,46)(27,50)
        \qbezier(50,40)(60,30)(62,18)
        \qbezier(50,40)(60,35)(70,24.8)
        \qbezier(50,40)(62,39)(74.7,35)
        
        \put(40,45.2){\thicklines\vector(-2,1){.0001}}
        \put(58.2,29){\thicklines\vector(-1,2){.0001}}
        \put(62,32.2){\thicklines\vector(-1,1){.0001}}
        \put(62.5,38.35){\thicklines\vector(-3,1){.0001}}
        
        \put(40,46){$\gamma^{c}$}
        \put(54,27){$\gamma_{+}$}
        \put(59.8,29.2){$\gamma$}
        \put(62.5,39){$\gamma_{-}$}
        
        \put(50,40){\thicklines\circle*{1.2}}
        \put(50,42){$z_{0}$}
        \put(44,67){$\partial D(z_{0},r)$}
    \end{picture}
    \caption{The jump contour for $P$.}
    \label{figure: contour P}
\end{center}
\end{figure}

\begin{itemize}
\item[(a)] $P : D(z_{0},r) \setminus (S_{1} \cup \gamma_{+} \cup \gamma_{-}) \to \mathbb{C}^{2\times 2}$ is analytic.
\item[(b)] For $z$ on the contour shown in Figure \ref{figure: contour P}, $P$ satisfies the jump conditions
\begin{equation}\label{jumps P}
\begin{array}{l l}
P_{+}(z) = P_{-}(z) \begin{pmatrix}
0 & e^{W(z)} \\ -e^{-W(z)} & 0
\end{pmatrix}, & \mbox{ on } \gamma\setminus \{z_{0},\overline{z_{0}}\}, \\

P_{+}(z) = P_{-}(z) \begin{pmatrix}
 1 & e^{n(\phi(z)-x)}e^{W(z)} \\
 0 & 1
\end{pmatrix}, & \mbox{ on } \gamma^{c}, \\

P_{+}(z) = P_{-}(z) \begin{pmatrix}
 1 & 0  \\ e^{-n\phi(z)}e^{-W(z)} & 1
\end{pmatrix}, & \mbox{ on } \gamma_{-} \cup \gamma_{+}. \\
\end{array}
\end{equation}
\item[(c)] For $z \in \partial D (z_{0},r)$, we have
\begin{equation}\label{matching P z0} P(z) = \left(I + \bigO(n^{-1})\right) P^{(\infty)}(z),\qquad \mbox{ as $n \to \infty$.}
\end{equation}
\item[(d)] As $z$ tend to $z_{0}$, the behaviour of $P$ is
\begin{equation}\label{P local}
\begin{array}{l l}
P(z) = \bigO(\ln|z-z_0|).
\end{array}
\end{equation}
\end{itemize}

\subsubsection{Bessel model RH problem}

We will construct $P$ in terms of a model RH problem for which the solution is constructed using Bessel functions.
Consider the following model RH problem:

\subsubsection*{RH problem for $\Psi$}

\begin{itemize}
\item[(a)] $\Psi : \mathbb{C} \setminus \Sigma_{\Psi} \to \mathbb{C}^{2\times 2}$ is analytic, where $\Sigma_{\Psi} = \mathbb{R}^{-} \cup \{ xe^{\pm\frac{2\pi }{3}i} : x \in \mathbb{R}^{+} \}$, with the orientation from $\infty$ towards $0$ for the three half-lines.
\item[(b)] $\Psi$ satisfies the jump conditions
\begin{equation*}
\begin{array}{l l}
\Psi_{+}(\zeta) = \Psi_{-}(\zeta) \begin{pmatrix}
0 & 1 \\ -1 & 0
\end{pmatrix}, & \zeta \in \mathbb{R}^{-}, \\

\Psi_{+}(\zeta) = \Psi_{-}(\zeta) \begin{pmatrix}
1 & 0 \\ 1 & 1
\end{pmatrix}, & \zeta \in \{ xe^{\frac{2\pi }{3}i} : x \in \mathbb{R}^{+} \}, \\

\Psi_{+}(\zeta) = \Psi_{-}(\zeta) \begin{pmatrix}
1 & 0 \\ 1 & 1
\end{pmatrix}, & \zeta \in \{ xe^{-\frac{2\pi}{3}i} : x \in \mathbb{R}^{+} \}. \\
\end{array}
\end{equation*}
\item[(c)] $\Psi(\zeta) = \left( 2\pi \zeta^{\frac{1}{2}} \right)^{-\frac{\sigma_{3}}{2}} \frac{1}{\sqrt{2}} \begin{pmatrix}1&i\\i&1\end{pmatrix}\left(
I+\bigO (\zeta^{-\frac{1}{2}})\right) e^{2\zeta^{\frac{1}{2}}\sigma_{3}}$, \quad as $\zeta \to \infty$, $\zeta \notin \Sigma_{\Psi}$.
\item[(d)] As $\zeta$ tend to 0, the behaviour of $\Psi(\zeta)$ is
\begin{equation}\label{Psi local}
\Psi(\zeta) = \bigO (\ln |\zeta| ).
\end{equation}
\end{itemize}
This model RH problem is well-known and it can be solved explicitly using Bessel functions, see e.g.\ \cite{Kuijlaars2, Krasovsky}. The unique solution to this RH problem is given by 
\begin{equation}\label{Psi explicit}
\Psi(\zeta)=\begin{cases}
\begin{pmatrix}
I_{0}(2\zeta^{\frac{1}{2}}) & \frac{i}{\pi} K_{0}(2\zeta^{\frac{1}{2}}) \\
2\pi i \zeta^{\frac{1}{2}} I_{0}^{\prime}(2\zeta^{\frac{1}{2}}) & -2\zeta^{\frac{1}{2}} K_{0}^{\prime}(2\zeta^{\frac{1}{2}})
\end{pmatrix}, & |\arg \zeta | < \frac{2\pi}{3}, \\

\begin{pmatrix}
\frac{1}{2} H_{0}^{(1)}(2(-\zeta)^{\frac{1}{2}}) & \frac{1}{2} H_{0}^{(2)}(2(-\zeta)^{\frac{1}{2}}) \\
\pi \zeta^{\frac{1}{2}} \left( H_{0}^{(1)} \right)^{\prime} (2(-\zeta)^{\frac{1}{2}}) & \pi \zeta^{\frac{1}{2}} \left( H_{0}^{(2)} \right)^{\prime} (2(-\zeta)^{\frac{1}{2}})
\end{pmatrix}, & \frac{2\pi}{3} < \arg \zeta < \pi, \\

\begin{pmatrix}
\frac{1}{2} H_{0}^{(2)}(2(-\zeta)^{\frac{1}{2}}) & -\frac{1}{2} H_{0}^{(1)}(2(-\zeta)^{\frac{1}{2}}) \\
-\pi \zeta^{\frac{1}{2}} \left( H_{0}^{(2)} \right)^{\prime} (2(-\zeta)^{\frac{1}{2}}) & \pi \zeta^{\frac{1}{2}} \left( H_{0}^{(1)} \right)^{\prime} (2(-\zeta)^{\frac{1}{2}})
\end{pmatrix}, & -\pi < \arg \zeta < -\frac{2\pi}{3},
\end{cases}
\end{equation}
where $H_0^{(1)}$ and $H_0^{(2)}$ are the Hankel functions of the first and second kind, and $I_0$ and $K_0$ are the modified Bessel functions of the first and second kind.

\subsubsection{Modification of the model RH problem}

We will use a conformal map which maps the curves $\gamma_-, \gamma, \gamma_+$ in the vicinity of $z_0$ to (part of) the jump contour for $\Psi$. This is similar to the construction in \cite{Krasovsky}, which corresponds to the case $s=0$. If $s=0$, $P$ has no jump on $\gamma_c$ (see Figure \ref{figure: contour P}), and the model RH problem fits perfectly to construct the local parametrix $P$. In our situation however, $P$ does have a jump on $\gamma_c$, and for this reason we need to modify the model RH problem. A different but similar construction was done in \cite{BDIK}.

Define $\widehat{\Psi}$ by
\begin{equation}\label{def hatPsi}
\widehat{\Psi}(\zeta) = \left( I + A(\zeta) \right) \Psi(\zeta), 
\end{equation}
where $A$ is given by 
\begin{equation}\label{def A}
A(\zeta) = e^{-nx} F(\zeta) \begin{pmatrix}
0 & - \frac{1}{2\pi i} \ln(-\zeta) \\ 0 & 0 \\
\end{pmatrix} F^{-1}(z),
\end{equation}
with $F$ defined by
\begin{equation}\label{F}
F(\zeta) = \left\{ \begin{array}{l l}
\Psi(\zeta) \begin{pmatrix}
1 & -\frac{1}{2\pi i} \ln \zeta \\
0 & 1
\end{pmatrix} & | \arg \zeta | < \frac{2\pi}{3}, \\

\Psi(\zeta) \begin{pmatrix}
1 & 0 \\
1 & 1
\end{pmatrix} \begin{pmatrix}
1 & -\frac{1}{2\pi i} \ln \zeta \\
0 & 1
\end{pmatrix} & \frac{2\pi}{3} < \arg \zeta < \pi, \\

\Psi(z) \begin{pmatrix}
1 & 0 \\
-1 & 1
\end{pmatrix} \begin{pmatrix}
1 & -\frac{1}{2\pi i} \ln \zeta \\
0 & 1
\end{pmatrix} & -\pi < \arg \zeta < -\frac{2\pi}{3}, \\

\end{array} \right. 
\end{equation}
where in both \eqref{def A} and \eqref{F} $\ln$ has its branch cut on $\mathbb{R}^{-}$ with imaginary part between $-\pi$ and $\pi$. It is easy to check that $F$ is an entire function.
$\widehat{\Psi}$ is analytic in  $\mathbb{C} \setminus \Sigma_{\widehat{\Psi}}$, with $\Sigma_{\widehat{\Psi}}$ as shown in Figure \ref{figPsiHat}. On $\Sigma_{\widehat{\Psi}}$, it has the jump relations
\begin{equation}\label{jumps hatPsi}
\begin{array}{l l}
\widehat{\Psi}_{+}(\zeta) = \widehat{\Psi}_{-}(\zeta) \begin{pmatrix}
0 & 1 \\ -1 & 0
\end{pmatrix}, & \mbox{ on } \mathbb{R}^{-}, \\

\widehat{\Psi}_{+}(\zeta) = \widehat{\Psi}_{-}(\zeta) \begin{pmatrix}
1 & e^{-nx} \\ 0 & 1
\end{pmatrix}, & \mbox{ on } \mathbb{R}^{+}, \\

\widehat{\Psi}_{+}(\zeta) = \widehat{\Psi}_{-}(\zeta) \begin{pmatrix}
 1 & 0  \\ 1 & 1
\end{pmatrix}, & \mbox{ on } \left\{ xe^{\frac{2\pi}{3}i} : x \in \mathbb{R}^{+} \right\}, \\

\widehat{\Psi}_{+}(\zeta) = \widehat{\Psi}_{-}(\zeta) \begin{pmatrix}
 1 & 0 \\ 1 & 1
\end{pmatrix}, & \mbox{ on } \left\{ xe^{-\frac{2\pi}{3}i} : x \in \mathbb{R}^{+} \right\}.
\end{array}
\end{equation}


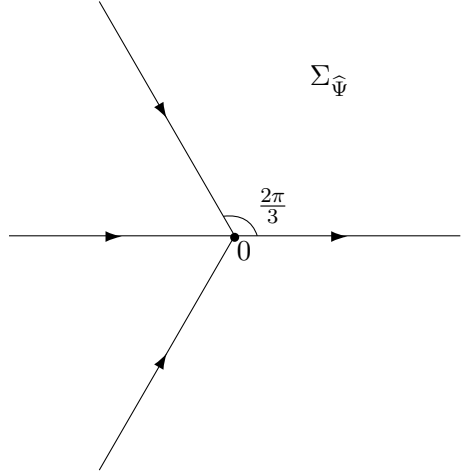
\begin{figure}[t]
    \begin{center}
    \setlength{\unitlength}{1truemm}
    \begin{picture}(100,55)(-5,10)
        \put(50,40){\line(1,0){30}}
        \put(50,40){\line(-1,0){30}}
        \put(50,39.8){\thicklines\circle*{1.2}}
        \put(50,40){\line(-0.5,0.866){18}}
        \put(50,40){\line(-0.5,-0.866){18}}
        \qbezier(53,40)(52,43)(48.5,42.598)
        \put(53,43){$\frac{2\pi}{3}$}
        \put(50.3,36.8){$0$}
        \put(65,39.9){\thicklines\vector(1,0){.0001}}
        \put(35,39.9){\thicklines\vector(1,0){.0001}}
        \put(41,55.588){\thicklines\vector(0.5,-0.866){.0001}}
        \put(41,24.412){\thicklines\vector(0.5,0.866){.0001}}
        \put(60,60){$\Sigma_{\widehat{\Psi}}$}
    \end{picture}
    \caption{The jump contour for $\widehat{\Psi}$.}
    \label{figPsiHat}
\end{center}
\end{figure}

\subsubsection{Construction of the local parametrix}

We construct $P$ as follows,
\begin{equation}\label{Explicit_P}
P(z)=E(z)\widehat\Psi(n^2\zeta(z))e^{-\frac{n}{2}\phi(z)\sigma_3}e^{-\frac{1}{2}W(z)\sigma_{3}},
\end{equation}
where $E$ is an analytic function in $D(z_0,r)$, and where
$\zeta(z) = \frac{1}{16} \phi(z)^{2}$. By (\ref{phi int}), we have that $\zeta(z)$ is a conformal map near $z_0$, and that $\zeta(z_0)=0$. Moreover, $\zeta$ maps $\gamma\cap D(z_0,r)$ to part of the real line. We now fix the lens-shaped contours $\gamma_-$ and $\gamma_+$ by requiring that $\zeta(\gamma_-\cup\gamma_+)\subset \Sigma_{\widehat\Psi}$.

For any analytic function $E$, we have that $P$ defined in \eqref{Explicit_P} satisfies conditions (a), (b), and (d) of the RH problem for $P$. Indeed, $P$ is analytic in $D(z_0,r)\setminus(\gamma\cup\gamma_+\cup\gamma_-\cup\gamma^c)$ by construction, and 
by (\ref{Explicit_P}) and (\ref{jumps hatPsi}), it is straightforward to verify that the jump condition
(\ref{jumps P}) holds. The logarithmic behavior (\ref{P local}) of $P$ near $z_0$ follows from the logarithmic behavior (\ref{Psi local}) together with the definition (\ref{def hatPsi})-(\ref{def A}) of $\widehat\Psi$.
If we define $E(z)$ by 
\begin{equation*}
E(z) = P^{(\infty)}(z)e^{\frac{1}{2}W(z)\sigma_{3}}  \frac{1}{\sqrt{2}} \begin{pmatrix}
1 & -i \\ -i & 1
\end{pmatrix} \left( \frac{1}{2} n\pi \phi(z) \right)^{\frac{\sigma_{3}}{2}},
\end{equation*}
we have in addition the matching condition (\ref{matching P z0}) for $P$.  Using the jump relation for $P^{(\infty)}$, it is easily verified that $E$ is analytic near $z_0$.
This completes the construction of the local parametrix near $z_0$.

\subsection{Local parametrix near $\overline{z_0}$}

Once the local parametrix near $z_0$ constructed, the local parametrix near $\overline{z_0}$ is easy to construct. Define for $z \in D(\overline{z_{0}},r)$,  $P(z)=\overline{P(\overline{z})}$, where $\overline{P(\overline{z})}$ refers to the local parametrix constructed near $z_{0}$. Then, $P$ satisfies the following RH conditions.

\subsubsection*{RH problem for $P$}

\begin{itemize}
\item[(a)] $P : D(\overline{z_{0}},r) \setminus (S_{1} \cup \gamma_{+} \cup \gamma_{-}) \to \mathbb{C}^{2\times 2}$ is analytic.
\item[(b)] For $z\in D(\overline{z_{0}},r)$ and $z$ on the jump contour, $P$ satisfies the jump conditions
\begin{equation*}
\begin{array}{l l}
P_{+}(z) = P_{-}(z) \begin{pmatrix}
0 & e^{W(z)} \\ -e^{-W(z)} & 0
\end{pmatrix}, & \mbox{ on } \gamma\setminus\{z_{0},\overline{z_{0}}\}, \\

P_{+}(z) = P_{-}(z) \begin{pmatrix}
 1 & e^{n(\phi(z)-x)}e^{W(z)} \\
 0 & 1
\end{pmatrix}, & \mbox{ on } \gamma^{c}, \\

P_{+}(z) = P_{-}(z) \begin{pmatrix}
 1 & 0  \\ e^{-n\phi(z)}e^{-W(z)} & 1
\end{pmatrix}, & \mbox{ on } \gamma_{-} \cup \gamma_{+}. \\
\end{array}
\end{equation*}
\item[(c)] For $ z \in \partial D (\overline{z_{0}},r)$, we have
\begin{equation}\label{matching P z0bar} P(z) = \left(I + \bigO(n^{-1})\right) P^{(\infty)}(z),\qquad\mbox{ as $n \to \infty$.}
\end{equation}
\item[(d)] As $z$ tend to $\overline{z_{0}}$, the behaviour of $P$ is
\begin{equation*}
\begin{array}{l l}
P(z) = \bigO(\ln|z-\overline{z_0}|).
\end{array}
\end{equation*}
\end{itemize}

\subsection{Local parametrix near $-1$}
For $x>x_c+\delta$, the jump matrix for $S$ converges exponentially fast to the identity matrix as $n\to\infty$ near  $-1$. However, as $x$ approaches $x_c$, the convergence becomes slower, and for $x=x_c$, the jump matrix for $S$ does not converge to $I$ any longer. Therefore, we need to construct a local parametrix near $-1$. This construction can be done for any $x\geq x_c$ but is only necessary when $x$ is close to $x_c$. The local parametrix should satisfy the following conditions.
\subsubsection*{RH problem for $P$}
\begin{itemize}
\item[(a)] $P : D(-1,r)\setminus S_{1} \to \mathbb{C}^{2\times 2}$ is analytic.
\item[(b)] $P$ has the jump
\begin{equation}
P_+(z)=P_-(z)\begin{pmatrix}1 & e^{n(\phi(z)-x)}e^{W(z)}\\0 &1
\end{pmatrix}, \qquad z\in S_{1}\cap D(-1,r).
\end{equation}
\item[(c)] For $z\in\partial D(-1,r)$, we have
\begin{equation}
\label{matching P -1}P(z) = (I + o(1))P^{(\infty)}(z),\qquad \mbox{ as $n \to \infty$.}
\end{equation}
\end{itemize}

The solution of this RHP is given by

\begin{equation}\label{P-1}
P(z) = P^{(\infty)}(z) \begin{pmatrix}
1 & \tilde{h}(z) \\ 0 & 1
\end{pmatrix},
\end{equation}
with $\displaystyle \tilde{h}(z) = \frac{1}{2\pi i} \int_{S_{1}\cap D(-1,r)} \frac{e^{n(\phi(s)-x)}e^{W(s)}}{s-z}ds$.
Using the fact that $\phi(s)-x_c$ has a double zero at $s=1$, it is straightforward to verify that
\begin{equation}
\tilde{h}(z)=\bigO(n^{-1/2}e^{n(x_c-x)}),\qquad\mbox{for $z\in \partial D(-1,r)$, as $n\to\infty$,}
\end{equation}
and this implies the matching condition (\ref{matching P -1}).

Note that we use the same notation $P$ for the different local parametrices defined in $D(z_0,r)$, $D(\overline{z_0,r})$, and $D(-1,r)$.

\subsection{Final transformation $S\mapsto R$}

Define
\begin{equation}\label{R}
R(z)=\begin{cases}
S(z)P^{(\infty)}(z)^{-1},&z\in\mathbb C\setminus(\overline{D(z_0,r)}\cup\overline{D(\overline{z_0},r)}\cup\overline{D(-1,r)}),\\
S(z)P(z)^{-1}, & z\in D(z_0,r)\cup D(\overline{z_0},r)\cup D(-1,r).
\end{cases}
\end{equation}
$P$ was constructed in such a way that it has exactly the same jump relations as $S$ in $D(z_0,r)\cup D(\overline{z_0},r)\cup D(-1,r)$, and as a consequence $R$ has no jumps at all inside those disks. Moreover, from the local behaviour of $S$ and $P$ near $z_0$, $\overline{z_0}$ and $-1$, it follows that $R$ is analytic at these three points.
We have the following RH problem for $R$:
\subsubsection*{RH problem for $R$}


\begin{figure}[t]
    \begin{center}
    \setlength{\unitlength}{1truemm}
    \begin{picture}(100,55)(-5,10)
        \cCircle(65,60){7.5}[f]
        \cCircle(65,20){7.5}[f]
        \cCircle(25,40){7.5}[f]
        \qbezier(69.5,54)(75,40)(69.5,26)
        \qbezier(72.3861,58.6976)(100,40)(72.3861,21.3024)
        \qbezier(26.125,47.415)(35.9673,69.0665)(58.3936,63.5488)
        \qbezier(26.125,32.585)(35.9673,10.9335)(58.3936,16.4512)
        
        \put(67,67.3){\thicklines\vector(1,0){.0001}}
        \put(63.8,12.6){\thicklines\vector(-1,0){.0001}}
        \put(17.5,41.5){\thicklines\vector(0,1){.0001}}
        \put(72.3,41){\thicklines\vector(0,1){.0001}}
        \put(86.2,42){\thicklines\vector(0,1){.0001}}
        \put(50,64.5){\thicklines\vector(-1,0){.0001}}
        \put(51,15.3){\thicklines\vector(1,0){.0001}}
        
        
        \put(64,59){$z_{0}$} 
        \put(64,19){$\overline{z_{0}}$}
        \put(22.5,38.8){$-1$}
        
        \put(48,68){$\Sigma_{R}$}
    \end{picture}
    \caption{The jump contour for $R$.}
    \label{figure: contour R}
\end{center}
\end{figure}
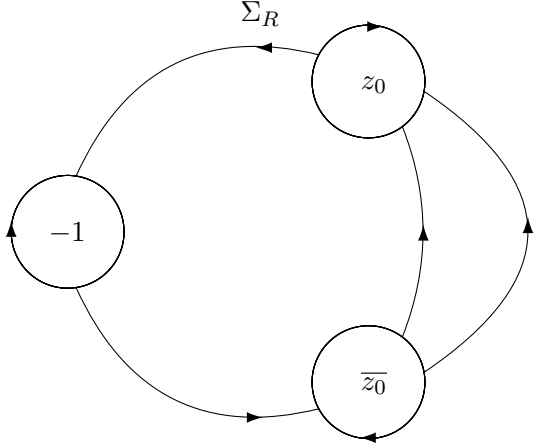

\begin{itemize}
\item[(a)] $R : \mathbb{C} \setminus \Sigma_{R} \to \mathbb{C}^{2\times 2}$ is analytic, with $\Sigma_R$ as in Figure \ref{figure: contour R}.
\item[(b)] $R$ satisfies the jump conditions:
\begin{equation*}
\begin{array}{l l}
\displaystyle R_{+}(z) = R_{-}(z) \left(I + \bigO(n^{-1})\right), & \mbox{ for } z \in \partial D(z_{0},r) \cup \partial D(\overline{z_{0}},r), \\[0.2cm]

\displaystyle R_{+}(z) = R_{-}(z) \left(I + \bigO(n^{-\frac{1}{2}}e^{n(x_c-x)})\right), & \mbox{ for } z \in \partial D(-1,r), \\[0.2cm]

\displaystyle R_{+}(z) = R_{-}(z) (I+\bigO(e^{-cn})), c>0 & \mbox{ for } z \mbox{ elsewhere on } \Sigma_{R}.

\end{array}
\end{equation*}
\item[(c)] $R(z) = I + \bigO(z^{-1})$ as $z \to \infty$.
\end{itemize}
As $n\to\infty$ with $s\leq e^{-x_cn}$, it follows from the standard theory for small-norm RH problems that \begin{equation}R(z) = I + \bigO(n^{-\frac{1}{2}}e^{n(x_c-x)})+\bigO(n^{-1}),\qquad R^{\prime}(z) = \bigO(n^{-\frac{1}{2}}e^{n(x_c-x)})+\bigO(n^{-1}),\label{as R}
\end{equation}
uniformly for $z \in \mathbb{C}\setminus \Sigma_{R}$.
By a more detailed analysis of Cauchy operators associated to the RH problem for $R$, one obtains in addition that
\begin{equation}\label{dxR}
\partial_x R(z)=\bigO(n^{1/2}e^{n(x_c-x)}).
\end{equation}
The latter estimate is not really needed for the proof of Theorem \ref{theorem: extensionWidom}, but will be needed in our alternative proof if $W=0$, see Section \ref{section: proof W0}.
Since $s=e^{-xn}$, it follows that 
\begin{equation}\label{dsR}
\partial_s R(z)=-\frac{e^{xn}}{n}\partial_x R(z)=\bigO(n^{-1/2}e^{nx_c}).
\end{equation}

\section{Asymptotics for $D_n(s,\theta_0,W)$}\label{section: int}

We will now complete the proof of Theorem \ref{theorem: extensionWidom} using the results from the previous sections. We first do this for general analytic $W$ in the case where $\epsilon<\theta_0<\pi-\epsilon$ for some $\epsilon>0$. Afterwards we will explain how the proof can be somewhat simplified if $W=0$, and we will show how it can be extended to the case where $\theta_0$ approaches $\pi$ at a sufficiently slow rate.

\subsection{Proof of Theorem \ref{theorem: extensionWidom} for $\epsilon<\theta_0<\pi-\epsilon$}

By inverting the transformations $Y\mapsto T\mapsto S\mapsto R$, we obtain an expression for $R$ in terms of $Y$, involving the parametrices $P$ and $P^{(\infty)}$.
For $z\in \gamma^c\cap (D(z_0,r)\cup D(\overline{z_0},r)\cup D(-1,r))$, we have
\begin{equation}\label{YinRin}
Y_\pm(z) = e^{-\frac{n\pi i}{2}\sigma_{3}}e^{-\frac{n\ell}{2}\sigma_3 } R(z) P_\pm(z) e^{ng(z)\sigma_3}e^{\frac{n\ell}{2} \sigma_{3}}e^{\frac{n\pi i}{2}\sigma_{3}},
\end{equation}
and for $z\in\gamma^c\setminus (D(z_0,r)\cup D(\overline{z_0},r)\cup D(-1,r))$,
\begin{equation}\label{YinR out}
Y_\pm(z) = e^{-\frac{n\pi i}{2}\sigma_{3}}e^{-\frac{n\ell}{2} \sigma_3} R_\pm(z) P^{(\infty)}(z) e^{ng(z)\sigma_3}e^{\frac{n\ell}{2} \sigma_{3}}e^{\frac{n\pi i}{2}\sigma_{3}}.
\end{equation}
It follows that
\begin{equation}\label{Yindiffid}
\left[Y^{-1}(z)Y'(z)\right]_{21}=\begin{cases} (-1)^{n}e^{2ng(z)}e^{n\ell}A_1(z),& z\in \gamma^c\cap (D(z_0,r)\cup D(\overline{z_0},r)\cup D(-1,r)),\\ (-1)^{n}e^{2ng(z)}e^{n\ell}A_2(z),&z\in\gamma^c\setminus (D(z_0,r)\cup D(\overline{z_0},r)\cup D(-1,r)),
\end{cases}
\end{equation}
where
\begin{align}
&A_{1}(z) = \left[ P^{-1}(z) R^{-1}(z) R^{\prime}(z) P(z) + P^{-1}(z)P^{\prime}(z) \right]_{21}, \\
&A_{2}(z) = \left[ {P^{(\infty )}}^{-1}(z) R^{-1}(z) R^{\prime}(z) P^{(\infty )}(z) + {P^{(\infty )}}^{-1}(z){P^{(\infty )}}^{\prime}(z) \right]_{21}.
\end{align}
Note that the boundary values of $A_1(z)$ and $A_2(z)$ as $z$ is approached from the inside and the outside of the unit circle are the same.
For $z\in D(-1,r)$, the local parametrix $P$ is given by (\ref{P-1}), and one verifies that the formulas for $A_1$ and $A_2$ coincide in this case.

 Substituting (\ref{Yindiffid}) in the differential identity \eqref{diffid1general}, we find 
\begin{equation} \label{diff_identity_W}
\partial_{s} \ln D_{n} (s,\theta_{0},W) = I_{1} + I_{2},
\end{equation}
where
\begin{equation*}
\begin{array}{l}
\displaystyle I_{1} = \frac{1}{2\pi i} \int_{\gamma_{1}^{c}} z^{-n} e^{2ng(z)}e^{n\ell} A_{1}(z) e^{W(z)}dz, \\[0.4cm]
\displaystyle I_{2} = \frac{1}{2\pi i} \int_{\gamma_{2}^{c}} z^{-n} e^{2ng(z)} e^{n\ell} A_{2}(z) e^{W(z)}dz, 
\end{array}
\end{equation*}
with $\gamma_{1}^{c} = \gamma^{c} \cap (D(z_{0},r) \cup D(\overline{z_{0}},r))$ and $\gamma_{2}^{c} = \gamma^{c}\setminus (D(z_{0},r) \cup D(\overline{z_{0}},r))$.

Now we need to know how $I_1$ and $I_2$ behave for large $n$ and $s\leq e^{-x_cn}$. For $A_2$, we note that $P^{(\infty)}$ does not depend on $n$, and that $R$ and $R'$ are uniformly bounded by (\ref{as R}). This implies that $A_{2}(z)$ is uniformly bounded on $\gamma_2^c$ for large $n$. For $A_1$, we need to take a closer look at the construction of $P$ near $z_0$ and $\overline{z_0}$, but it is straightforward to show that $P(z)=\bigO(n)$, $P^{-1}(z)=\bigO(n)$, and $P'(z)=\bigO(n^2)$ for $z\in\gamma_1^c$.
We get
\begin{equation*}
\begin{array}{r c l}
\displaystyle |I_{1}| & = & \displaystyle \bigO \left(n^3 \int_{[\theta_{0},\theta_{0}+r] \cup [2\pi - \theta_{0}-r,2\pi - \theta_{0}]} \left| e^{2ng(e^{i\alpha})+n\ell} \right| d\alpha \right), \\[0.4cm]
\displaystyle |I_{2}| & = & \displaystyle \bigO \left( \int_{\theta_{0}+r}^{2\pi -\theta_{0}-r} \left| e^{2ng(e^{i\alpha})+n\ell} \right| d\alpha \right),
\end{array}
\end{equation*}
as $n\to\infty$ with $s\leq e^{-x_c n}$.
The function $2g(z)+\ell-x_{c}$ is always negative on $\gamma^{c}$ except that it has a zero of order two at $z=-1$. 
Therefore we obtain after a straightforward analysis that
\begin{align}
&|I_{1}| = \bigO\left(e^{n(x_{c}-C)} \right),\qquad C>0, \label{Estimate1}\\
&|I_{2}| = \bigO (n^{-1/2}e^{nx_{c}}), \label{Estimate2}
\end{align}
as $n\to\infty$, $s\leq e^{-x_c n}$.

If we integrate \eqref{diff_identity_W} from $0$ up to $s=e^{-xn} \leq e^{-x_{c}n}$, we finally obtain the desired estimate
\begin{equation*}
\ln D_{n} (s,\theta_{0},W) - \ln D_{n}(0,\theta_{0},W) = \int_{0}^{s} (I_{1} + I_{2}) ds^{\prime} =\bigO(n^{-1/2}e^{-n(x-x_c)}).
\end{equation*}

\subsection{Alternative proof of Theorem \ref{theorem: extensionWidom} if $W=0$, $s=o(n^{1/2}e^{-nx_c})$}\label{section: proof W0}
In the case where $W=0$, there is an alternative approach to prove Theorem \ref{theorem: extensionWidom}: we can use the differential identity (\ref{D_n_in_term_of_Y}) instead of (\ref{diffid1general}). This does not make the proof much shorter, but it has the advantage that no integrals have to be estimated, and that we only need information about $Y$ at the points $0$ and $z_0$, instead of on the entire curve $\gamma_c$. The objects in the differential identity can be computed more explicitly in this case by the following result.

\begin{proposition}\label{Y11 and Y12}
Let $W=0$. As $n\to\infty$ with $x>x_c$, we have
\begin{align} \label{Asymptotics_Y12_Y11}
& Y_{12}(0)=  e^{-n\ell}\left[ \sin \frac{\theta_{0}}{2}+\bigO (n^{-1/2} )\right],\\
&\label{dsY} \partial_s \ln Y_{12}(0)= \bigO (n^{-1/2}e^{nx_c} ),\\
&Y_{11}(z_0)=e^{-\frac{n\ell}{2}}\bigO (n^{1/2}).
\end{align}
\end{proposition}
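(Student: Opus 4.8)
The plan is to trace back the transformations $Y \mapsto T \mapsto S \mapsto R$ and substitute the explicit formulas for the global and local parametrices, together with the small-norm estimate \eqref{as R}. The starting point is formula \eqref{YinR out} (valid away from the disks) and \eqref{YinRin} (valid inside the disks near $z_0$), evaluated at the two points we need. Since $W=0$, the global parametrix simplifies: $h\equiv 0$, $h_\infty=0$, so
\[
P^{(\infty)}(z) = \begin{pmatrix}
\tfrac{1}{2}(\beta(z)+\beta^{-1}(z)) & -\tfrac{1}{2i}(\beta(z)-\beta^{-1}(z))\\[2pt]
\tfrac{1}{2i}(\beta(z)-\beta^{-1}(z)) & \tfrac{1}{2}(\beta(z)+\beta^{-1}(z))
\end{pmatrix},\qquad \beta(z)=\Bigl(\tfrac{z-\overline{z_0}}{z-z_0}\Bigr)^{1/4}.
\]

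For $Y_{12}(0)$: the point $z=0$ lies outside all three disks, so by \eqref{YinR out} (with $g(0)$ computable from \eqref{g} and the explicit density $u^{(\infty)}$),
\[
Y(0) = e^{-\frac{n\pi i}{2}\sigma_3}e^{-\frac{n\ell}{2}\sigma_3}R(0)P^{(\infty)}(0)e^{ng(0)\sigma_3}e^{\frac{n\ell}{2}\sigma_3}e^{\frac{n\pi i}{2}\sigma_3}.
\]
Reading off the $(1,2)$ entry and using $R(0)=I+\bigO(n^{-1/2}e^{n(x_c-x)})+\bigO(n^{-1})$ from \eqref{as R}, one gets $Y_{12}(0) = e^{-n\ell}e^{-2ng(0)}\bigl[(P^{(\infty)})_{12}(0) + \bigO(n^{-1/2})\bigr]$ up to the unimodular conjugation factors. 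A direct evaluation of $g(0)$ and of $(P^{(\infty)})_{12}(0)$ (both are elementary integrals/products involving $\sin\tfrac{\theta_0}{2}$) yields the claimed leading coefficient $\sin\tfrac{\theta_0}{2}$; since $P^{(\infty)}$ is $n$-independent and $g(0)$ is $n$-independent, \eqref{Asymptotics_Y12_Y11} follows. For \eqref{dsY}, differentiate $\ln Y_{12}(0)$ in $s$: only $R(0)$ depends on $s$, so $\partial_s \ln Y_{12}(0) = \partial_s R(0)\cdot(\text{bounded}) = \bigO(n^{-1/2}e^{nx_c})$ by \eqref{dsR}. For $Y_{11}(z_0)$: the point $z_0$ lies on the boundary of $D(z_0,r)$; using \eqref{YinRin} with the local parametrix $P$ near $z_0$ and the bound $P(z)=\bigO(n)$ established in the RH section (the Bessel functions $K_0, I_0$ contribute the logarithmic/algebraic growth, and the factor $(\tfrac12 n\pi\phi(z))^{\sigma_3/2}$ in $E(z)$ gives the $n^{1/2}$), together with $e^{ng(z_0)}e^{\frac{n\ell}{2}}$ being of order $e^{\frac{n\ell}{2}}$ (since $2g(z_0)+\ell$ vanishes by the variational equality on $\gamma$), one obtains $Y_{11}(z_0)=e^{-\frac{n\ell}{2}}\bigO(n^{1/2})$.

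The main obstacle is the bookkeeping for $Y_{11}(z_0)$: one has to be careful about which sector of the Bessel parametrix $z_0$ sits in (it is the branch point $\zeta=0$ of $\zeta(z)=\tfrac1{16}\phi(z)^2$), about the logarithmic singularity encoded in $\widehat\Psi$ versus $\Psi$, and about the precise power of $n$ coming from $\widehat\Psi(n^2\zeta(z))$ evaluated near $\zeta=0$ combined with the $(\tfrac12 n\pi\phi(z))^{\sigma_3/2}$ factor. Since the statement only claims an order-of-magnitude bound $\bigO(n^{1/2})$ rather than a precise constant, it suffices to bound each factor, which makes this manageable but still the delicate step. The $Y_{12}(0)$ computation, by contrast, is a clean substitution once $g(0)$ and $P^{(\infty)}(0)$ are computed explicitly, and \eqref{dsY} is immediate from \eqref{dsR}.
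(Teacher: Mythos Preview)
Your approach is correct and is essentially the same as the paper's: evaluate $Y$ via \eqref{YinR out} at $z=0$ and via \eqref{YinRin} at $z=z_0$, using the explicit $P^{(\infty)}$ and the small-norm estimates \eqref{as R}--\eqref{dsR}. Two small bookkeeping corrections to keep in mind: the $(1,2)$ entry picks up only \emph{one} factor $e^{-ng(0)}$ from the right conjugation (not $e^{-2ng(0)}$), and the paper computes $g(0)=\pi i$ exactly so that the unimodular factors cancel; for $Y_{11}(z_0)$ it is $2\Re g(z_0)+\ell$ that vanishes (in fact $g(z_0)=-\tfrac{\ell}{2}+i\tfrac{\theta_0+\pi}{2}$), and the paper then uses $\Psi_{11}(0)=I_0(0)=1$, $\Psi_{21}(0)=0$ together with $E_{j1}(z_0)=\bigO(n^{1/2})$ and $\phi(z_0)=0$ to get $P_{j1}(z_0)=\bigO(n^{1/2})$.
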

\begin{proof}
Using (\ref{YinR out}) and the expressions $g(0) = \pi i$ and $P_{12}^{(\infty)}(0) = \sin \frac{\theta_{0}}{2}$ (if $W=0$), we get the result for $Y_{12}(0)$.

For $\partial_s \ln Y_{12}(0)$, we have
\begin{equation}
\partial_s \ln Y_{12}(0)=\frac{\partial_s\left(R_{11}(0)P_{12}^{(\infty)}(0)+R_{12}(0)P^{(\infty)}_{22}(0)\right)}{R_{11}(0)P_{12}^{(\infty)}(0)+R_{12}(0)P^{(\infty)}_{22}(0)}.
\end{equation}
By (\ref{dsR}), this yields (\ref{dsY}).

For the rest of this proof, we assume that $|z|<1$ and that $z$ lies outside of the lenses and in $D(z_0,r)$. Then we have by (\ref{YinRin}),
\begin{equation}\label{Y11}
Y_{11}(z) = e^{ng(z)} \left[ R_{11}(z)P_{11}(z) + R_{12}(z)P_{21}(z) \right].
\end{equation}
By (\ref{g}) and (\ref{g1}), we can show that
\begin{equation*}
g(z_{0}) = -\frac{\ell}{2} + i \frac{\theta_{0}+\pi}{2}.
\end{equation*}
On the other hand, by (\ref{Psi explicit}), as $z \to z_{0}$ for fixed $n$, we have
\begin{equation*}
\Psi_{11}(n^{2}\zeta(z)) = 1+ \bigO(z-z_{0}), \quad \Psi_{21}(n^{2}\zeta(z)) = \bigO(z-z_{0}).
\end{equation*}
This implies, by (\ref{def hatPsi}), that
\begin{equation*}
P_{j1}(z_{0}) = E_{j1}(z_{0})(1+\bigO(e^{-nx}))e^{-\frac{n}{2}\phi(z_{0})},\qquad\mbox{ as $n\to\infty$}.
\end{equation*}
Since $\phi(z_{0}) = 0$, we have $P_{j1}(z_{0}) = \bigO(\sqrt{n})$, $j=1,2$, and
\begin{equation*}
Y_{11}(z_{0}) = e^{n \left( -\frac{\ell}{2} +i \frac{\theta_{0}+\pi}{2} \right)}  \left( P_{11}(z_{0}) + \bigO \left( 1 \right) \right)=e^{-\frac{n\ell}{2}}\bigO (n^{1/2}),
\end{equation*}
as $n\to\infty$.
\end{proof}

By Proposition \ref{Y11 and Y12} and (\ref{D_n_in_term_of_Y}), we have
\begin{align}
&n \partial_{s}\ln Y_{12}(0) = \bigO(n^{1/2}e^{nx_c}),\\
& \frac{2(1-s)}{\pi} \Im \left( \frac{\overline{Y_{11}(e^{i\theta_{0}})}}{\sqrt{Y_{12}(0)}} \partial_{s} \left( \frac{Y_{11}(e^{i\theta_{0}})}{\sqrt{Y_{12}(0)}} \right) \right) = \bigO(n),
\end{align}
as $n\to\infty$.
Therefore, using \eqref{integration_diff_identity}, we get
\begin{equation*}
\begin{array}{r c l}
\displaystyle \ln D_{n}(s,\theta_{0},0) & = & \displaystyle \ln D_{n}(0,\theta_{0},0) + \int_{0}^{s} \bigO(n^{1/2}e^{nx_c}) ds^{\prime} \\
& = & \displaystyle \ln D_{n}(0,\theta_{0},0) + \bigO(n^{1/2}e^{-n(x-x_{c})}),
\end{array}
\end{equation*} and we rederive Theorem \ref{theorem: extensionWidom} with a slightly worse error term which is only small if $s=o(n^{1/2}e^{-nx_c})$.

\subsection{Extension to the case where $\theta_0$ depends on $n$}
The RH analysis done in the previous section is valid as $n\to\infty$ with $\epsilon<\theta_0<\pi-\epsilon$ and $\epsilon>0$ independent of $n$. If $\theta_0=\theta_0(n)$ depends on $n$ and approaches $\pi$ as $n\to\infty$, the arc $\gamma$ grows and the {\em gap} $(e^{i\theta_0},e^{i(2\pi-\theta_0)})$ closes. 
Similarly to \cite{Krasovsky}, we will show here that the RH analysis carried out before remains valid as long as $n(\pi-\theta_0)$ is large.

A first problem in the RH analysis is that we need to allow the radius $r=r(n)$ of the disks $D(z_0,r)$, $D(-1,r)$, and $D(\overline{z_0},r)$ to depend on $n$ (or on $\theta_0$) in order to prevent the disks to overlap.
For instance, we can keep the disks separated from each other if we let $r(\theta_0)=\delta(\pi-\theta_0)$, with $\delta>0$ a sufficiently small but fixed number.
Then the local parametrices near $z_0, \overline{z_0}, -1$ can be constructed in exactly the same way as before. However, in order to make the asymptotic analysis work, it is crucial that the jump matrix for $R$ tend to $I$ uniformly as $n\to\infty$ on the $n$-dependent jump contour. Recall that the jump matrix for $R$ is given by
\begin{equation}
J_R(z)=\begin{cases}P(z)P^{(\infty)}(z)^{-1},&z\in\partial D(z_0,r)\cup \partial D(\overline{z_0},r)\cup \partial D(-1,r),\\
P^{(\infty)}(z)J_S(z) P^{(\infty)}(z)^{-1},&z\in\Sigma_R\setminus\left(\partial D(z_0,r)\cup \partial D(\overline{z_0},r)\cup \partial D(-1,r)\right),
\end{cases}
\end{equation}
where $J_S$ denotes the jump matrix for $S$ given in (\ref{JS}).

The first thing to notice is that  $h(z)$, defined in \eqref{h}, and $\beta(z)$, defined right before (\ref{h}), are uniformly bounded on the jump contour $\Sigma_{R}$. This implies that $P^{(\infty)}(z)$ is uniformly bounded for $z\in\Sigma_R$. 
Next, by (\ref{phi int}), we have \begin{equation}|\phi(z)| \geq C \sin(\pi-\theta_{0}),\qquad z\in\Sigma_R,
\end{equation} 
for a constant $C > 0$ independent of $z$ and $\theta_{0}$. 
By (\ref{JS}), this already implies that the jump matrix $J_R$ is $I+\bigO(n^{-1}(\pi-\theta_0)^{-1})$ on $\Sigma_R$, except on the three circles around $z_0, \overline{z_0}$, and $-1$.
On $\partial D(z_0,r)$ and $\partial D(\overline{z_0},r)$, we have
\begin{eqnarray}
J_R(z)&=&P(z)P^{(\infty)}(z)^{-1}=P^{(\infty)}(z)\left(I+\bigO(n^{-1}\phi(z)^{-1})\right)P^{(\infty)}(z)^{-1}\\&=&I+\bigO(n^{-1}(\pi-\theta)^{-1}).
\end{eqnarray}
Finally, on $\partial D(-1,r)$, by (\ref{P-1}), we have
\begin{equation}
J_R(z)=P(z)P^{(\infty)}(z)^{-1}=I+\bigO(\widetilde{h}(z))=I+\bigO(n^{-1/2}e^{-n(x-x_c)}).
\end{equation}
It follows from these considerations that \[J_{R}(z) - I = \bigO(n^{-1}(\pi-\theta_0)^{-1}) + \bigO(n^{-1/2}e^{-n(x-x_c)}),\qquad\mbox{ as $n\to\infty$},\] with $\theta_0$ approaching $\pi$ sufficiently slowly such that $n (\pi-\theta_{0}) \to + \infty$. By the small-norm theory for RH problems, it follows that \begin{equation}
R(z) = I+\bigO \left( n^{-1}(\pi-\theta_{0})^{-1}\right) + \bigO(n^{-1/2}e^{-n(x-x_c)}),
\end{equation} and \begin{equation}
\qquad R^{\prime}(z) = \bigO \left( n^{-1}(\pi-\theta_{0})^{-1}\right) + \bigO(n^{-1/2}e^{-n(x-x_c)}),
\end{equation}
as $n\to\infty$, $n (\pi-\theta_{0}) \to + \infty$.
The proof of Theorem \ref{theorem: extensionWidom} can now be completed in the same way as before, where the estimate \eqref{Estimate1} remains valid, \eqref{Estimate2} becomes
\begin{equation*}
|I_{2}| =\bigO((\pi-\theta_0)^{1/2}n^{-1/2}e^{nx_c}),
\end{equation*}
and we obtain the error term (\ref{error large y}).

\subsection{Heuristic discussion of the asymptotics for $s>e^{-x_cn}$}\label{section: elliptic}

As $n\to\infty$ with $s=e^{-xn}>e^{-(x_c-\delta)n}$, $\delta>0$, it is expected that the behavior of the Toeplitz determinants $D_n(s,\theta_0,W)$ is very different from the behavior described in Theorem \ref{theorem: extensionWidom}. As shown in Proposition \ref{thm200}, the equilibrium measure is then supported on two disjoint arcs. In the RH analysis, that means that the RH problem for the global parametrix $P^{(\infty)}$ will become more involved. It is well understood that $P^{(\infty)}$ can then be constructed using elliptic $\theta$-functions, see e.g.\ \cite{Bleher, DKMVZ2, DKMVZ1}. For this reason and because of the analogy with the Fredholm determinants of the sine kernel discussed in Section \ref{section: Fredholm}, we expect that the asymptotic behavior of the Toeplitz determinants will also involve elliptic $\theta$-functions, and will be oscillatory. A transition between the Widom asymptotics (\ref{Asymptotic_one_arc}) and the Fisher-Hartwig asymptotics (\ref{Asymptotic_FH}) should become visible here.
A rigorous analysis is delicate, and we plan to come back to this in a future publication.

\section*{Acknowledgements}
The authors are grateful to Igor Krasovsky for useful discussions and for sharing an early version of \cite{BDIK}, and to Alexander Bufetov for useful discussions. They were supported by the Belgian Interuniversity Attraction Pole P07/18 and by F.R.S.-F.N.R.S. TC was also supported by the European Research Council under the European Union's Seventh Framework Programme (FP/2007/2013)/ ERC Grant Agreement n.\, 307074.


\begin{thebibliography}{99}
\bibitem{BDJ}
    J. Baik, P. Deift, and K. Johansson,
    On the distribution of the length of the longest increasing subsequence
    of random permutations,
    {\em J. Amer. Math. Soc.} {\bf 12} (1999), 1119--1178.

\bibitem{Basor} E. Basor, Asymptotic formulas for Toeplitz determinants, {\em Trans.
Amer. Math. Soc.} {\bf 239}  (1978), 33--65.

\bibitem{Basor2} E. Basor, A localization theorem for Toeplitz determinants, {\em Indiana Univ. Math. J.} {\bf 28} (1979), no. 6,
975--983.
\bibitem{BS}
A. B\"ottcher, B. Silbermann, Toeplitz operators and determinants
generated by symbols with one Fisher-Hartwig singularity, {\em Math.
Nachr.}  {\bf 127}  (1986), 95--123.

\bibitem{Bleher} P. Bleher,
Lectures on random matrix models. The Riemann-Hilbert approach. Random matrices, random processes and integrable systems, 251--349, 
{\em CRM Ser. Math. Phys.}, Springer, New York, 2011.

\bibitem{BDIK} T. Bothner, P. Deift, A. Its, and I. Krasovsky, On the asymptotic behavior of a log gas in the bulk scaling limit in the presence of a varying external potential I, arxiv:1407.2910.

\bibitem{Deift} P. Deift,
Orthogonal Polynomials and Random Matrices: A Riemann-Hilbert Approach.
Courant Lecture Notes in Mathematics 3,
\textit{Amer. Math. Soc.}, Providence, RI, 1999.

\bibitem{Deift2} P. Deift, A. Its, I. Krasovky,
Asymptotics of Toeplitz, Hankel, and Toeplitz+Hankel determinants with Fisher-Hartwig singularities, \textit{Ann. of Math.} \textbf{174} (2011), 1243--1299.

\bibitem{DeiftItsKrasovsky} P. Deift, A. Its, and I. Krasovsky, 
On the asymptotics of a Toeplitz determinant with singularities, arXiv:1206.1292, to appear in MSRI publications.


\bibitem{DIK-hist} P. Deift, A. Its, and I. Krasovsky,
Toeplitz Matrices and Toeplitz Determinants under the Impetus of the Ising Model: Some History and Some Recent Results, {\em Commun. Pure Appl. Math.} {\bf 66} (2013), 1360--1438.

\bibitem{DKMVZ2}
    P. Deift, T. Kriecherbauer, K.T-R McLaughlin, S. Venakides, and X. Zhou,
    Uniform asymptotics for polynomials orthogonal with respect to
    varying exponential weights and applications to universality
    questions in random matrix theory,
    {\em Comm. Pure Appl. Math.} {\bf 52} (1999), 1335--1425.
    
\bibitem{DKMVZ1}
    P. Deift, T. Kriecherbauer, K.T-R McLaughlin, S. Venakides,
    and X. Zhou,
    Strong asymptotics of orthogonal polynomials with respect to
    exponential weights,
    {\em Comm. Pure Appl. Math.} {\bf 52} (1999), 1491--1552.
    
\bibitem{Dyson} F. Dyson, The Coulomb fluid and the fifth Painlev\'e transcendent, appearing in {\em Chen Ning Yang: A Great Physiscist of the Twentieth Century,} eds. C.S. Liu and S.-T. Yau, International Press, Cambridge, MA, pp. 131--146, 1955.

\bibitem{Ehr} T. Ehrhardt, A status report on the asymptotic behavior of Toeplitz
determinants with Fisher-Hartwig singularities, {\em Operator
Theory: Adv. Appl.} {\bf 124}, 217--241 (2001).


\bibitem{FH} M.E. Fisher and R.E. Hartwig, Toeplitz determinants: Some applications, theorems, and conjectures,
{\em Advan. Chem. Phys.} {\bf 15} (1968), 333--353.

\bibitem{FIK}
    A.S. Fokas, A.R. Its, and A.V. Kitaev,
    The isomonodromy approach to matrix models in 2D quantum gravity,
    {\em Comm. Math. Phys.} {\bf 147} (1992), 395--430.

\bibitem{Krasovsky} I. Krasovsky,
Gap probability in the spectrum of random matrices and asymptotics of polynomials orthogonal on an arc of the unit circle, \textit{Int. Math. Res. Not.} (2004), 1249--1272

\bibitem{Krasovsky2} I. Krasovsky, Asymptotics for Toeplitz determinants on a circular arc, arxiv:math/0401256.

\bibitem{Kuijlaars2} A.B.J. Kuijlaars, K. T-R McLaughlin, W. Van Assche, M. Vanlessen, The Riemann-Hilbert approach to strong asymptotics for orthogonal polynomials on $[-1,1]$, \textit{Adv. Math.} \textbf{188} (2004), 337--398

\bibitem{SaTo} E.B. Saff and V. Totik,
Logarithmic potentials with external fields,
{Grundlehren der Mathematischen Wissenschaften [Fundamental
              Principles of Mathematical Sciences]},
    {\bf 316}, Springer-Verlag, Berlin, 1997.

\bibitem{Simon} B. Simon, Orthogonal polynomials on the unit circle: classical theory, {\em American Mathematical Society Colloquium Publications} {\bf 54}, American Mathematical Society, Providence, RI, 2005.


\bibitem{Widom} H. Widom, The strong Szeg\H{o} limit theorem for circular arcs, {\em Indiana Univ. Math. J.} {\bf 21}
(1971), 277--283.

\bibitem{W} H. Widom. Toeplitz determinants with singular generating
functions. {\em Amer. J. Math.} {\bf 95} (1973), 333--383.
\end{thebibliography}
\end{document}